\title[Learning Coherent Clusters in Weakly-Connected Network Systems]{Learning Coherent Clusters in Weakly-Connected Network Systems}
\definecolor{bleudefrance}{rgb}{0.19, 0.55, 0.91}
\definecolor{ao(english)}{rgb}{0.0, 0.5, 0.0}
\newcommand{\addcite}[0]{\ifthenelse{\boolean{showcomments}}
{\textcolor{purple}{(add cite(s)) }}{}}%
\newcommand{\hl}[1]{\ifthenelse{\boolean{showcomments}}
{\textcolor{red}{#1}}{#1}}
\newcommand{\enrique}[1]{  \ifthenelse{\boolean{showcomments}}
{\todo[inline,color=bleudefrance]{Enrique: #1}}{}}
\newcommand{\emmargin}[1]{\ifthenelse{\boolean{showcomments}}{\marginpar{\color{bleudefrance}\tiny EM: #1}}{}}
\newcommand{\hancheng}[1]{  \ifthenelse{\boolean{showcomments}}
{\todo[inline,color=orange]{Hancheng: #1}}{}}
\tikzstyle{block} = [draw, rectangle, minimum height=3em, minimum width=3em]
\tikzstyle{sum} = [draw, circle, node distance=1cm]
\tikzstyle{input} = [coordinate] \tikzstyle{output} = [coordinate]
\tikzstyle{tmp} = [coordinate]
\newcommand{\tr}{\text{tr}}
\newif\ifshownotes
\definecolor{notetext}{rgb}{0.7,0,0}
\newtheorem{assumption}{Assumption}
\author{%
 \Name{Hancheng Min} \Email{hanchmin@jhu.edu}\\
 \Name{Enrique Mallada} \Email{mallada@jhu.edu}\\
 \addr 
 Department of Electrical and Computer Engineering\\
 Johns Hopkins University, Baltimore, MD, U.S.
}
\begin{document}

\maketitle

\begin{abstract}%
    We propose a structure-preserving model-reduction methodology for large-scale dynamic networks with tightly-connected components. First, the coherent groups are identified by a spectral clustering algorithm on the graph Laplacian matrix that models the network feedback. Then, a reduced network is built, where each node represents the aggregate dynamics of each coherent group, and the reduced network captures the dynamic coupling between the groups. We provide an upper bound on the approximation error when the network graph is randomly generated from a weight stochastic block model. Finally, numerical experiments align with and validate our theoretical findings.
\end{abstract}

\begin{keywords}
  Spectral Clustering; Network Systems; Model Reduction
\end{keywords}

\section{Introduction}
In networked dynamical systems, coherence refers to a coordinated behavior from a group of nodes such that all nodes have similar dynamical responses to some external disturbances~\citep{chow1982time}. Coherence analysis is useful in understanding the collective behavior of large networks, including consensus networks~\citep{Olfati-Saber20041520}, transportation networks~\citep{Bamieh2012}, and power networks~\citep{ramaswamy1995}. However, little do we know about the underlying mechanism that causes such coherent behavior to emerge in various networks.

Classic slow coherence analyses~\citep{chow1982time,ramaswamy1996,romeres2013,tyuryukanov2021,fritzsch2022} (with applications mostly to power networks) usually consider the second-order electro-mechanical model without damping: $\ddot{x}=-M^{-1}Lx$, where $M$ is the diagonal matrix of machine inertias, and $L$ is the Laplacian matrix whose elements are synchronizing coefficients between pair of machines. The coherency or synchrony~\citep{ramaswamy1996} (a generalized notion of coherency) is identified by studying the first few slowest eigenmodes (eigenvectors with small eigenvalues) of $M^{-1}L$. The analysis can be carried over to the case of uniform~\citep{chow1982time} and non-uniform~\citep{romeres2013} damping. However, such state-space-based analysis is limited to very specific node dynamics (second order) and does not account for more complex dynamics or controllers that are usually present at a node level; e.g., in the power systems literature~\citep{jpm2021tac, jbvm2021lcss, ekomwenrenren2021}.  
There is, therefore, the need for coherence identification procedures that work for more general network systems.
    
    Recently, it has been theoretically established that coherence naturally emerges when the connectivity of a group of nodes is sufficiently large, regardless of the node dynamics, as long as the interconnection remains stable~\citep{min2019cdc,min2021a}. The analysis also provides an asymptotically (as the network connectivity increases) exact characterization of the coherent response, which amounts to a harmonic sum of individual node transfer functions. Thus, in a sense, coherence identification is closely related to the problem of finding tightly connected components in the network, for which many clustering algorithms based on the spectral embedding of graph adjacency or Laplacian matrices, exist and are theoretically justified~\citep{bach2004learning}. This leads to the natural question: \emph{Can these graph-based clustering algorithms be adopted for coherence identification in networked dynamical systems?} Intuitively, when we apply those clustering algorithms to identify tightly-connected components in the network, each component should be coherent also in the dynamical sense. Then, applying ~\citet{min2019cdc,min2021a} for each cluster should lead to a good model for each coherent group, which, after interconnected with an appropriately chosen reduced graph, should lead to a good network-reduced aggregate model of the dynamic interactions across coherent components.~\citet{min2022} formalizes such an approach exclusively for networks with two coherent components.

    In this paper, we extend the result in~\citet{min2022} to networks with an arbitrary number of coherent groups. Specifically, our structure-preserving approximation model for large-scale networks is constructed in two stages: First, the coherent groups are identified by a spectral clustering algorithm solely on the graph Laplacian matrix of the network; Then a reduced network, in which each node represents the aggregate dynamics of one coherent group, approximates the dynamical interactions between the coherent groups in the original network. More importantly, we provide an upper bound on the approximation error when the network graph is randomly generated from a weight stochastic block model, and the numerical results align with our theoretical findings.

    Structure-preserving model reduction has been mostly studied for mechanical systems~\citep{Li2006,LALL2003304} using Krylov subspace projection, and has only recently been adopted for network systems such as power networks~\citep{Bita2021}. However,~\citet{Bita2021} assumes second-order nodal dynamics, and the resulting model can not be interpreted as a network. Our approach exploits the natural multi-cluster structure of many network systems, resulting in a reduced network that captures the interaction among the clusters.

    The rest of the paper is organized as follows: We formalize the coherence identification problem in Section \ref{sec_prelim} and also propose our reduction algorithm. Then we show in Section \ref{sec_ideal_net_model} the rationale behind the algorithm and provide theoretical justification in Section \ref{sec_analysis}. Lastly, we validate our model by numerical experiments in Section \ref{sec_num}. 
    
    \emph{Notation:}~For a real vector $x$, $\|x\|=\sqrt{x^Tx}$ denotes the $2$-norm of $x$, $[x]_i$ denotes its $i$-th entry, and for a real matrix $A$, $\|A\|,\|A\|_F$ denotes the spectral norm, and the Frobenius norm, respectively. For a Hermitian matrix $A$ of size $n$, we let $\lambda_i(A)$ denote its $i$-th smallest eigenvalue, and $v_i(A)$ the associated unit-norm eigenvector. We let $\dg\{x_i\}_{i=1}^n$ denote a $n\times n$ diagonal matrix with diagonal entries $x_i$, $I_n$ denote the identity matrix of order $n$, $V^T$ denote the transpose of matrix $V$, $\one_n$ denote $[1,\cdots,1]^T $ with dimension $n$, and  $[n]$ denote the set $\{1,2,\cdots,n\}$.  For non-negative random variables $X(n),Y(n)$, we write $X(n)\sim\mathcal{O}_p(Y(n))$ if $\exists M>0$, s.t. $\lim_{n\ra \infty}\prob\lp X(n)\leq M Y(n)\rp=1$. We write $X(n)\sim\Omega_p(Y(n))$ if $\exists M>0$, s.t. $\lim_{n\ra \infty}\prob\lp X(n)\geq M Y(n)\rp=1$.
\section{Preliminaries}\label{sec_prelim}
\subsection{Network Model}
We consider a similar network model to the one considered in~\citet{min2021a,min2022}. The network consisting of $n$ nodes ($n\geq 2$), indexed by $i\in[n]$ with the block diagram structure in Fig.\ref{fig_blk_digm}. $L$ is the Laplacian matrix of an undirected, weighted graph that describes the network interconnection. We further use $f(s)$ to denote the transfer function representing the dynamics of the network coupling, and $G(s)=\mathrm{diag}\{g_i(s)\}_{i=1}^n$ to denote the nodal dynamics, with $g_i(s),\ i\in[n]$, being an SISO transfer function representing the dynamics of node $i$. 
\begin{figure}[ht]
	\centering
	\begin{tikzpicture}[auto, node distance=1.5cm,>=latex']
	\node [input, name=input] {};
	\node [sum, right of=input] (sum) {};
	\node [block, right of=sum] (plant) {$G(s)$};
	\node [output, right of=plant, node distance=2cm] (output) {};
	\node [block, below of=plant] (laplacian) {$f(s)L$};
	\draw [draw,->] (input) -- node {$u$} (sum);
	\draw [->] (sum) -- (plant);
	\draw [->] (plant) -- node [name=y]{$y$} (output);
	\draw [->] (y) |- (laplacian);
	\draw [->] (laplacian) -| node[pos=0.95]{$-$}(sum);
		
	\end{tikzpicture}
	\caption{Block Diagram of General Networked Dynamical Systems}\label{fig_blk_digm}
\end{figure}
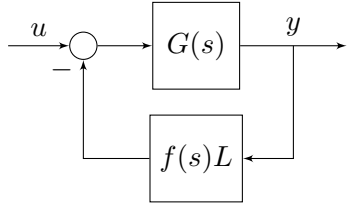

The network takes a vector signal $u=[u_1,\cdots,u_n]^T$ as input, whose component $u_i$ is the disturbance or input to node $i$. The network outputs a vector $y=[y_1,\cdots,y_n]^T$ that contains the individual node outputs $y_i,i=1,\cdots,n$. We are interested in characterizing and approximating the response of the transfer matrix $T_{yu}(s)$ under certain assumptions on the network topology, i.e., the Laplacian matrix $L$. 


\subsection{Network Coherence and Structure-preserving Model Reduction}
Recent work~\citep{min2019cdc,min2021a} has shown that, under mild assumptions, the following holds\footnote{In~\citet{min2021a}, the transfer matrix $\frac{1}{n}\bar{g}(s)\one\one^T$ appeared in the limit, where $\bar{g}(s)=\lp\frac{1}{n}\sum_{i=1}^ng_i^{-1}(s)\rp^{-1}$. It is easy to verify that $\frac{1}{n}\bar{g}(s)\one\one^T=\hat{g}(s_0)\one\one^T$ } for almost any $s_0\in\mathbb{C}$,
\be
    \lim_{\lambda_2(L)\ra\infty}\|T_{yu}(s_0)-\hat{g}(s_0)\one\one^T\|=0\,,\label{eq_coherent}
\ee
where
\be
    \hat{g}(s)=\lp \sum_{i=1}^ng_i^{-1}(s)\rp^{-1}\,.\label{eq_hat_g_def}
\ee
That is, when the algebraic connectivity $\lambda_2(L)$ of the network is high, one can approximate $T_{yu}(s)$ by a rank-one transfer matrix. Such a rank-one transfer matrix $\hat{g}(s_0)\one\one^T$ precisely describes the coherent behavior of the network: The network takes the aggregated input $\hat{u}=\one^Tu=\sum_{i=1}^nu_i$, and responds coherently as $\hat{y}\one$, where $\hat{y}=\hat{g}(s)\hat{u}$. Therefore, it suffices to study $\hat{g}(s)$ to understand the coherent behavior in a tightly-connected network. The aggregate dynamics $\hat{g}(s)$ has been studied for tightly-connected power networks~\citep{min2020lcss}, and~\citet{jbvm2021lcss,haberle2022grid} proposed a control design that leads to desirable response for the entire network by shaping the response of $\hat{g}(s)$.

However, practical networks are not necessarily tightly-connected. Instead, they often contain multiple groups of nodes such that within each group, the nodes are tightly-connected while between groups, the nodes are weakly-connected. Then the network dynamics can be reduced to dynamic interactions among these groups. To approximate such interaction, it is natural first to identify \emph{coherent groups}, or \emph{coherent clusters}, in the network, then apply the aforementioned analysis to obtain the coherent dynamics $\hat{g}(s)$ for each group, and replace the entire coherent group by an aggregate node with $\hat{g}(s)$. Lastly, one needs to find a reduced network of the same size as the number of coherent groups, which characterize the interaction among these groups. The aggregate dynamics and the reduced network allow us to build a network model with exactly the same structure as the one in Figure \ref{fig_blk_digm} but with a much smaller size, for which we refer to such an approach as \emph{structure-preserving model reduction} and call the resulting reduction model \emph{structure-preserving}. Figure \ref{fig_algo_approx_model} shows our proposed reduced model in the case of three coherent groups, for which the algorithm details are explained later.

In the case of two coherent groups,~\citet{min2020lcss} proposed an algorithm that first uses a simple spectral clustering algorithm
to identify the two coherent groups, then shows that the weight of the reduced network (a two-node graph with a single edge) is determined by the algebraic connectivity of the original Laplacian matrix $\lambda_2(L)$ and the size of each group. However, such an approach does not work for networks with more than two coherent groups.

\subsection{Our Algorithm}
In this paper, we propose a structure-preserving model reduction algorithm for networks with an arbitrary number of groups.
\begin{algorithm}[!h]
    \caption{Structure-Preserving Network Reduction via Spectral Clustering}
    \KwData{Network Model $\lp G(s)=\dg\{g_i(s)\}_{i=1}^n, L, f(s)\rp$; Number of clusters $k$}
    \textbf{Do}:
    \begin{enumerate}
        \item $(\{\mathcal{I}_i\}_{i=1}^k, V_k, \Lambda_k)\la \text{SpectralClustering}(L)$; \tcp{Spectral clustering}
        Construct $P_{\{\mathcal{I}_i\}_{i=1}^k}$ as in \eqref{eq_ind_mat};
        \item $\hat{g}_i(s)\la\lp\sum_{j\in\mathcal{I}_i}g_j^{-1}(s)\rp^{-1},\ i=1,\cdots,k$; \tcp{Aggregation}
        
        $\hat{G}(s)=\dg\{\hat{g}_i(s)\}_{i=1}^k$; 
        \item 
        $S\la$(Solution to \eqref{eq_V_fit_opt_fix_part}); 
        
        $L_k=(S^{-1})^T\Lambda_k S^{-1}$; \tcp{Construct reduced network}
    \end{enumerate}
    \KwResult{$\hat{T}_k(s)\la P_{\{\mathcal{I}_i\}_{i=1}^k}(I_k+\hat{G}_k(s)L_kf(s))^{-1}\hat{G}_k(s)P^T_{\{\mathcal{I}_i\}_{i=1}^k}$}
    \label{algo_approx_model}
\end{algorithm}

This algorithm, whose rationale will be explained in detail in Section \ref{sec_ideal_net_model}, follows the same procedure as we discussed in the previous section: Firstly, we utilize some spectral clustering algorithm to obtain a $k$-way partition $\{\mathcal{I}_i\}_{i=1}^n$ of $[n]$ that encodes the clustering results. Notice that here any spectral clustering algorithm works. For subsequent steps, we also need to keep the first $k$ smallest eigenvalues of $L$ (in $\Lambda_k=\dg\{\lambda_i(L)\}_{i=1}^k$) and their associated eigenvectors (in $V_k=\bmt v_1(L)&v_2(L)&\cdots v_k(L)\emt$). Then the nodes in the same group $\mathcal{I}_i$ are aggregated into $\hat{g}_i(s)$. Lastly, the Laplacian matrix of the reduced network is constructed after solving an optimization problem \eqref{eq_V_fit_opt_fix_part} that can be viewed as a refinement process on the Laplacian spectral embedding $V_k$. This algorithm will return a transfer matrix $\hat{T}_k(s)$ as an approximation model of the original transfer matrix $T_{yu}(s)$. The algorithm is illustrated in Figure \ref{fig_algo_approx_model}.

In the rest of the paper, we first discuss how our algorithm is constructed based on the aforementioned coherence analysis~\citep{min2021a} in Section \ref{sec_ideal_net_model}, then show that our proposed approximation model is asymptotically accurate in a random graph setting where the network graph is sampled from a \emph{weighted stochastic block model}~\citep{ahn2018hypergraph} by showing an approximation error bound between the network $T_{yu}(s)$ and the proposed reduced model $\hat{T}_k(s)$ (in Section \ref{sec_analysis}). Lastly, we verify our theoretical findings through a numerical simulation in Section \ref{sec_num}.
\begin{figure}[t]
\centering
\includegraphics[width=0.9\textwidth]{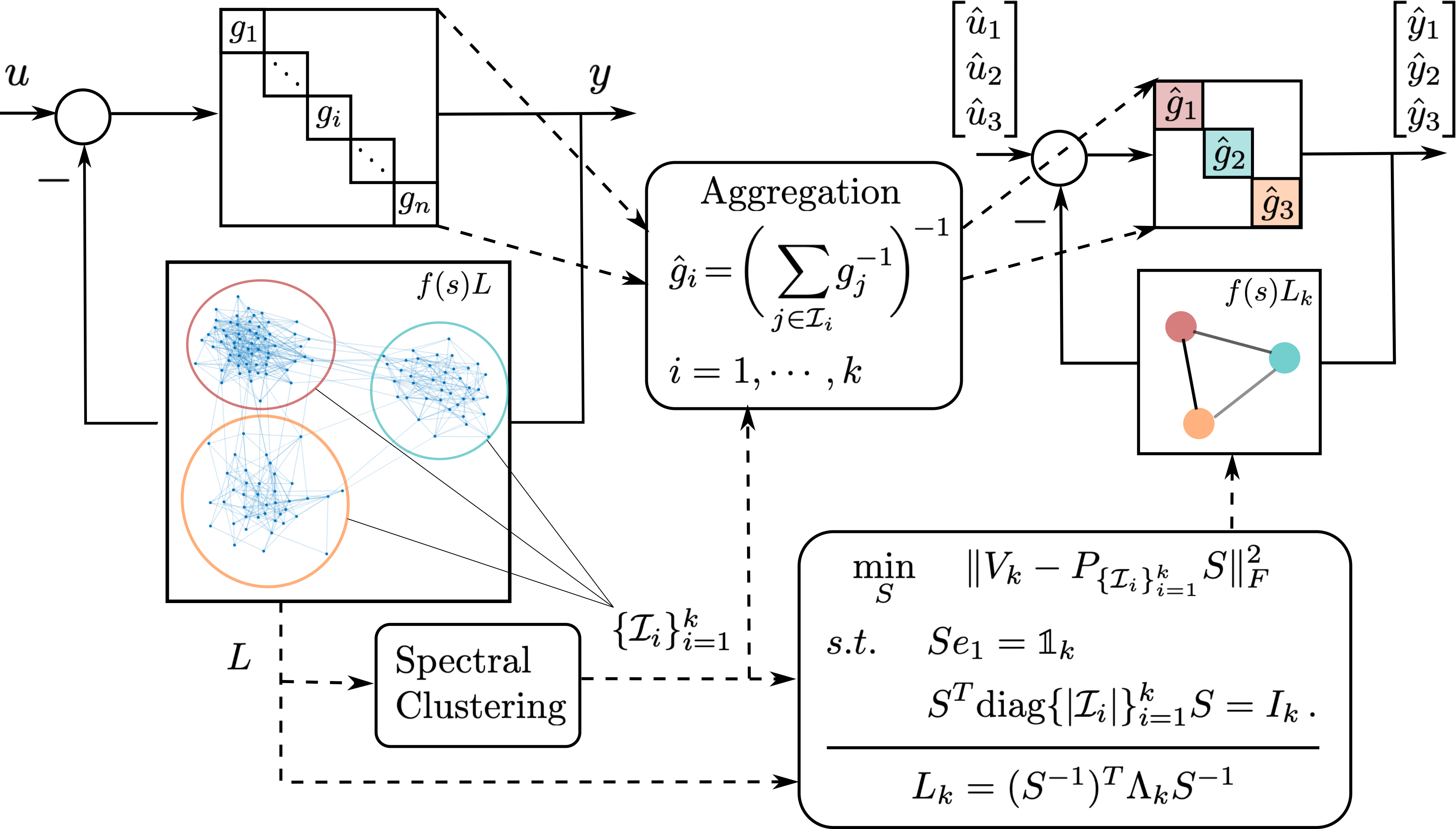}
\caption{Functional illustration of Algorithm \ref{algo_approx_model}.}
\label{fig_algo_approx_model}
\end{figure}
\section{Structure-Preserving Network Reduction via Spectral Clustering}\label{sec_ideal_net_model}
Our algorithm roots in the recent analysis~\citep{min2021a,min2022} showing that the network transfer matrix $T_{yu}(s)$ is approximately low rank for networks with Laplacian matrices satisfying some spectral property. Such a low-rank approximation is generally not structure-preserving, for which we use its closest structure-preserving approximation, obtained by spectral clustering on graph Laplacian $L$ and a refinement process on its eigenvectors $V_k$, as our final reduction model for the original $T_{yu}(s)$.
\subsection{Low-rank Approximation of Network Transfer Matrix}\label{ssec_low_rank_approx}
Given the network Laplacian $L$ and its first $k$ smallest eigenvalues (in a diagonal matrix) $\Lambda_k=\dg\{\lambda_i(L)\}_{i=1}^k$ and the associated eigenvectors $V_k=\bmt v_1(L)&v_2(L)&\cdots v_k(L)\emt$ (we also refer it as \emph{Laplacian spectral embedding}), we define the following rank-$k$ transfer matrix
\be
    T_k(s)=V_k(V_k^TG^{-1}(s)V_k+f(s)\Lambda_k)^{-1}V_k^T\,,
\ee
and we have the following result:
\begin{theorem}\label{thm_Tk}
    For $s_0\in\compl$ that is not a pole of $f(s)$ and has these two quantities $$\|T_k(s_0)\|:= M_1, \text{and }\max_{1\leq i\leq n}|g_i^{-1}(s_0)|:= M_2\,,$$ finite. Then whenever $|f(s_0)|\lambda_{k+1}(L)> M_2+M_1M_2^2$, the following inequality holds:
    \be
        \lV T_{yu}(s_0)-T_k(s_0)\rV\leq \frac{\lp M_1M_2+1\rp^2}{|f(s_0)|\lambda_{k+1}(L)-M_2-M_1M_2^2}\,.\label{eq_T_norm_bd}
    \ee
\end{theorem}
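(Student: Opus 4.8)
The plan is to first make the closed-loop transfer matrix explicit. Reading off the interconnection in Figure~\ref{fig_blk_digm}, the loop equation $y = G(s_0)\lp u - f(s_0)Ly\rp$ yields $\lp G^{-1}(s_0) + f(s_0)L\rp y = u$, so
\[
    T_{yu}(s_0) = \lp G^{-1}(s_0) + f(s_0)L\rp^{-1} =: A^{-1},
\]
which is well defined since $M_2<\infty$ forces $G^{-1}(s_0)$ to have finite diagonal. I would then diagonalize the Laplacian as $L = V\Lambda V^T$ with $V = \bmt V_k & V_k^{\perp}\emt$ orthogonal, abbreviate $U:=V_k$ and $W:=V_k^{\perp}$, and record the blocks of $A$ in this basis. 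Writing $A_{11}=U^TG^{-1}U+f(s_0)\Lambda_k$, $A_{22}=W^TG^{-1}W+f(s_0)\Lambda_{n-k}$, $A_{12}=U^TG^{-1}W$ and $A_{21}=A_{12}^T$ (all $G^{-1}$ at $s_0$), the identities $LU=U\Lambda_k$ and $LW=W\Lambda_{n-k}$ give $AU=UA_{11}+WA_{21}$ and $AW=UA_{12}+WA_{22}$. A useful bookkeeping fact is that since $U$ has orthonormal columns and $T_k(s_0)=UA_{11}^{-1}U^T$, we have $\|A_{11}^{-1}\|=\|T_k(s_0)\|=M_1$ exactly.

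The heart of the argument is a clean residual identity. From $AU=UA_{11}+WA_{21}$ and $I_n-UU^T=WW^T$,
\[
    AT_k(s_0) = \lp UA_{11}+WA_{21}\rp A_{11}^{-1}U^T = UU^T + WA_{21}A_{11}^{-1}U^T,
\]
hence
\[
    T_{yu}(s_0)-T_k(s_0) = A^{-1}\lp I_n-AT_k(s_0)\rp = A^{-1}W\lp W^T - A_{21}A_{11}^{-1}U^T\rp.
\]
This factors the error into two pieces. The right factor is bounded immediately by the triangle inequality together with $\|A_{21}\|\le\|G^{-1}(s_0)\|=M_2$, giving $\|W^T - A_{21}A_{11}^{-1}U^T\|\le 1 + M_1M_2$.

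For the left factor I would bound $\|A^{-1}W\|$ by solving $Ay=Wz$ block-wise: writing $y=Uy_1+Wy_2$ and matching $U$- and $W$-components gives $A_{11}y_1+A_{12}y_2=0$ and $A_{21}y_1+A_{22}y_2=z$, so $y_2=\lp A_{22}-A_{21}A_{11}^{-1}A_{12}\rp^{-1}z$ and $y_1=-A_{11}^{-1}A_{12}y_2$. The quantitative core is controlling this Schur complement. Because the singular values of $f(s_0)\Lambda_{n-k}$ are $|f(s_0)|\lambda_i(L)$, the smallest being $|f(s_0)|\lambda_{k+1}(L)$, Weyl's inequality for singular values gives $\sigma_{\min}(A_{22})\ge |f(s_0)|\lambda_{k+1}(L)-M_2$, and subtracting $\|A_{21}A_{11}^{-1}A_{12}\|\le M_1M_2^2$ yields
\[
    \sigma_{\min}\lp A_{22}-A_{21}A_{11}^{-1}A_{12}\rp \ge |f(s_0)|\lambda_{k+1}(L)-M_2-M_1M_2^2 > 0,
\]
which is precisely where the hypothesis $|f(s_0)|\lambda_{k+1}(L)>M_2+M_1M_2^2$ enters. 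Using $\|y_1\|\le M_1M_2\|y_2\|$ and orthonormality, $\|y\|=\sqrt{\|y_1\|^2+\|y_2\|^2}\le\sqrt{M_1^2M_2^2+1}\,\|y_2\|\le(M_1M_2+1)\|y_2\|$, so $\|A^{-1}W\|\le (M_1M_2+1)/\lp|f(s_0)|\lambda_{k+1}(L)-M_2-M_1M_2^2\rp$; multiplying the two factor bounds delivers \eqref{eq_T_norm_bd}. I expect the main obstacle to be exactly this left factor: since $f(s_0)$ and $G^{-1}(s_0)$ are complex, $A$ and $A_{22}$ are generally non-Hermitian, so eigenvalue interlacing is unavailable and one must route the estimate through singular-value perturbation (Weyl) and the Schur complement rather than a symmetric resolvent bound; keeping $\|A_{11}^{-1}\|=M_1$ sharp is what lets the constants close into the stated $(M_1M_2+1)^2$.
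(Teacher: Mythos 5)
Your proof is correct and arrives at exactly the stated constant, but it organizes the estimate differently from the paper. The paper also works in the eigenbasis of $L$: it writes $H = V^TG^{-1}(s_0)V + f(s_0)\Lambda$ in $2\times 2$ block form, invokes the explicit block-inversion formula for $H^{-1}$ built on the Schur complement of the \emph{lower-right} block, $(H_k - H_o^TH_d^{-1}H_o)^{-1}$, and then bounds $\lV H^{-1}-\mathrm{diag}(H_k^{-1},0)\rV$ by summing the spectral norms of all four blocks of the difference; a final algebraic simplification collapses that sum to $(M_1M_2+1)^2/(|f(s_0)|\lambda_{k+1}(L)-M_2-M_1M_2^2)$. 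You instead factor the residual once, $T_{yu}(s_0)-T_k(s_0) = A^{-1}W\lp W^T - A_{21}A_{11}^{-1}U^T\rp$, and bound the two factors separately, controlling $\|A^{-1}W\|$ by a block solve, i.e.\ through the Schur complement of the \emph{upper-left} block, $A_{22}-A_{21}A_{11}^{-1}A_{12}$. Both routes rest on the same three ingredients --- $\|A_{11}^{-1}\|=\|T_k(s_0)\|=M_1$, off-diagonal blocks bounded by $M_2$, and a singular-value (Weyl) perturbation bound giving $\sigma_{\min}(A_{22})\geq |f(s_0)|\lambda_{k+1}(L)-M_2$ --- and both consume the hypothesis $|f(s_0)|\lambda_{k+1}(L)>M_2+M_1M_2^2$ at the analogous point, so the proofs are close relatives; your correct insistence on singular values rather than eigenvalue interlacing for the non-Hermitian $A_{22}$ mirrors the paper's use of Weyl for $\|H_d^{-1}\|$. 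What your factorization buys is economy and transparency: two factor bounds replace four block bounds plus the closing algebra, the step $\sqrt{M_1^2M_2^2+1}\leq M_1M_2+1$ makes the emergence of $(M_1M_2+1)^2$ immediate, and invertibility of $A$ (hence existence of $T_{yu}(s_0)$, which the paper assumes implicitly) falls out of your Schur-complement estimate together with invertibility of $A_{11}$. What the paper's block-inverse route buys is an explicit formula for each block of the error in the eigenbasis, which gives per-subspace information beyond the single norm bound, at the cost of messier bookkeeping.
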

\ifthenelse{\boolean{archive}}{The proof is shown in Appendix~\ref{app_pf_thm1_3}.}{We refer the readers to~\citet{min2022b} for the proof.} Previous work presented similar approximation results for the case $k=1$~\citep{min2021a} and $k=2$~\citep{min2022}. Theorem \ref{thm_Tk} shows that in the large $\lambda_{k+1}(L)$ regime, one can somewhat approximate the original transfer matrix $T_{yu}(s)$ by a low-rank one $T_k(s)$, but the approximation result in \eqref{eq_T_norm_bd} is weaker than that the two transfer matrices $T_{yu}(s)$ and $T_k(s)$ are close in the $\mathcal{H}_\infty$ sense. It heavily depends on the choice of $s_0$, the frequency of interest, as we should not expect $T_{yu}(s)$ and $T_k(s)$ to behave similarly under input of any frequency. For the case of $k=1$,~\citet{min2021a} have shown that if $\sup_{s\in(-j\eta,+j\eta)}\|T_{yu}(s)-T_k(s)\|$ is small for some $\eta>0$, then one can show, provided that $T_{yu}(s)$ and $T_k(s)$ are stable, the time domain responses of the two transfer matrices under low-frequency inputs (characterized by $\eta$) are close to each other. 

Following such observation, we consider any $\hat{T}_k(s)$ with $\sup_{s\in(-j\eta,+j\eta)}\|T_{yu}(s)-\hat{T}_k(s)\|$ being small for some $\eta>0$ as a good approximation for the original network. Applying \eqref{eq_T_norm_bd} uniformly over $\{s: s\in(-j\eta,+j\eta)\}$, one can show that $T_k(s)$ is such a good approximation when $\lambda_{k+1}(L)$ is large. However, $T_k(s)$ is, in general, not structure-preserving, and thus may not be interpreted as a reduced network of aggregate nodes. Therefore, we need to find a structure-preserving $\hat{T}_k(s)$ that is close to $T_k(s)$.

\subsection{Structured Low-rank Approximation via Spectral Embedding Refinement}
We first discuss the case when $T_k(s)$ is structure-preserving. We show that a special property on the Laplacian spectral embedding $V_k$ suffices. For some $\mathcal{I}\subseteq [n]$, we let $\one_\mathcal{I}$ be an $n\times 1$ vector such that $[\one_\mathcal{I}]_i=\begin{cases}
    1,& i\in \mathcal{I}\\
    0, & i\notin \mathcal{I}
\end{cases}$.
\begin{definition}
        A Laplacian matrix $L$ is said to be \textbf{k-block-ideal} with respect to a $k$-way partition $\{\mathcal{I}_1,\cdots,\mathcal{I}_k\}$ of $[n]$, if there exists some invertible matrix $S\in\mathbb{R}^{k\times k}$ such that
        \begin{align*}
            V_k:=\bmt v_1(L)&v_2(L)&\cdots &v_k(L)\emt
            =\bmt \one_{\mathcal{I}_1}&\one_{\mathcal{I}_2}&\cdots &\one_{\mathcal{I}_k} \emt S\,.
        \end{align*}
        We also say $V_k$ is k-block-ideal in this case. 
    \end{definition}
    A $k$-block-ideal spectral embedding $V_k$, together with $\Lambda_k$ containing the bottom $k$ eigenvalues of $L$, would immediately lead to a reduced network: the $k$ coherent groups are determined by the $k$-way partition $\{\mathcal{I}_i\}_{i=1}^k$, and the invertible matrix $S$, combined with $\Lambda_k$, characterize the interconnection in the reduced network, as show in the following theorem: 
    \begin{theorem}\label{thm_k_ideal_T_k}
        Given a $k$-block-ideal Laplacian $L$ associated with a partition $\{\mathcal{I}_1,\cdots,\mathcal{I}_k\}$ and an invertible matrix $S$, and we define 
        \begin{align}
            P_{\{\mathcal{I}_i\}_{i=1}^k}:=\bmt \one_{\mathcal{I}_1}&\one_{\mathcal{I}_2}&\cdots &\one_{\mathcal{I}_k} \emt \,,\label{eq_ind_mat}
        \end{align}
        then
        \be
            T_k(s)=P_{\{\mathcal{I}_i\}_{i=1}^k}(I_k+\hat{G}_k(s)L_kf(s))^{-1}\hat{G}_k(s)P^T_{\{\mathcal{I}_i\}_{i=1}^k}\,,
        \ee
        where $\hat{G}(s)=\dg\{\hat{g}_i(s)\}_{i=1}^k\,,\ \hat{g}_i(s)=\lp\sum_{j\in\mathcal{I}_i}g^{-1}_j(s)\rp^{-1}$ and $
            L_k=(S^{-1})^T\Lambda_k S^{-1}$.
    \end{theorem}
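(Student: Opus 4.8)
The plan is to prove this as an exact algebraic identity by substituting the $k$-block-ideal factorization $V_k = P_{\{\mathcal{I}_i\}_{i=1}^k}S$ directly into the definition $T_k(s)=V_k(V_k^TG^{-1}(s)V_k+f(s)\Lambda_k)^{-1}V_k^T$ and then simplifying. Writing $P:=P_{\{\mathcal{I}_i\}_{i=1}^k}$ for brevity and suppressing the argument $s$, substitution immediately gives $T_k = PS(S^TP^TG^{-1}PS+f\Lambda_k)^{-1}S^TP^T$. The whole proof then rests on two identities that let me collapse the inner factor into something involving $\hat{G}_k$ and $L_k$: an \emph{aggregation identity} $P^TG^{-1}P=\hat{G}_k^{-1}$, and a \emph{congruence identity} $\Lambda_k=S^TL_kS$.

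First I would establish the aggregation identity, which is the only step that genuinely uses the partition structure. Since $G^{-1}(s)=\dg\{g_i^{-1}(s)\}_{i=1}^n$ is diagonal and the blocks $\mathcal{I}_1,\dots,\mathcal{I}_k$ are pairwise disjoint, the $(i,j)$ entry $\one_{\mathcal{I}_i}^TG^{-1}(s)\one_{\mathcal{I}_j}=\sum_{l\in\mathcal{I}_i\cap\mathcal{I}_j}g_l^{-1}(s)$ vanishes for $i\neq j$ and equals $\sum_{l\in\mathcal{I}_i}g_l^{-1}(s)=\hat{g}_i^{-1}(s)$ for $i=j$. Hence $P^TG^{-1}(s)P=\dg\{\hat{g}_i^{-1}(s)\}_{i=1}^k=\hat{G}_k^{-1}(s)$, so this is precisely where the aggregate dynamics $\hat{g}_i(s)$ enter; everything afterward is linear algebra.

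Next I would invoke the congruence identity $\Lambda_k=S^TL_kS$, which follows directly from $L_k=(S^{-1})^T\Lambda_kS^{-1}$. Substituting both identities into the inner factor yields $S^T\hat{G}_k^{-1}S+fS^TL_kS=S^T(\hat{G}_k^{-1}+fL_k)S$, whose inverse is $S^{-1}(\hat{G}_k^{-1}+fL_k)^{-1}(S^{-1})^T$. The outer factors $PS$ on the left and $S^TP^T$ on the right then cancel $S^{-1}$ and $(S^{-1})^T$, leaving $T_k = P(\hat{G}_k^{-1}(s)+f(s)L_k)^{-1}P^T$. Finally I would factor $\hat{G}_k^{-1}$ to the left, writing $\hat{G}_k^{-1}+fL_k=\hat{G}_k^{-1}(I_k+\hat{G}_kfL_k)$ and using that the scalar $f(s)$ commutes to reorder $\hat{G}_kfL_k=\hat{G}_kL_kf$; inverting gives $(I_k+\hat{G}_k(s)L_kf(s))^{-1}\hat{G}_k(s)$, which produces exactly the claimed expression.

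There is no deep obstacle here: the statement is an exact identity and the proof is a chain of substitutions. The only points requiring care are bookkeeping ones, namely tracking transposes correctly and respecting that while the scalar $f(s)$ may be freely commuted, the matrices $\hat{G}_k(s)$ and $L_k$ may not, so the final ordering $(I_k+\hat{G}_k(s)L_kf(s))^{-1}\hat{G}_k(s)$ must be obtained in the right sequence. One should also note that all inverses involved are well defined: $S$ is invertible by the $k$-block-ideal hypothesis, and the inner matrix is invertible at any $s$ that is not a pole of the constituent transfer functions, which is the same standing assumption under which $T_k(s)$ itself is defined.
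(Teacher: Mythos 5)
Your proof is correct and follows essentially the same route as the paper's: substitute $V_k=P_{\{\mathcal{I}_i\}_{i=1}^k}S$ into the definition of $T_k(s)$, use the invertibility of $S$ to collapse the inner inverse, identify $P^T_{\{\mathcal{I}_i\}_{i=1}^k}G^{-1}(s)P_{\{\mathcal{I}_i\}_{i=1}^k}=\hat{G}_k^{-1}(s)$ and $(S^{-1})^T\Lambda_kS^{-1}=L_k$, and factor out $\hat{G}_k(s)$. If anything, you are slightly more careful than the paper, which asserts the aggregation identity without the explicit disjointness argument you give.
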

    \ifthenelse{\boolean{archive}}{The proof is shown in appendix~\ref{app_pf_thm1_3}.}{We refer the readers to~\citet{min2022b} for the proof.} Theorem \ref{thm_k_ideal_T_k} shows that under $k$-block-ideal $V_k$, the dynamical behavior of $T_k$ is structure-preserving since it is fully characterized by a reduced network with $k$ nodes, with nodal dynamics $\hat{G}(s)$ and network coupling $L_k$. Each node $\hat{g}_i(s)$ represents the aggregate dynamics for nodes in $\mathcal{I}_i$. Any input $u$ to $T_k(s)$ is aggregated into $\bmt \hat{u}_1& \cdots &\hat{u}_k\emt^T=P^T_{\{\mathcal{I}_i\}_{i=1}^k}u$ as the input to the reduced network. Then the output $\bmt \hat{y}_1&\cdots &\hat{y}_k\emt^T$ is ``broadcast" to the original nodes via $P_{\{\mathcal{I}_i\}_{i=1}^k}$ such that every node in the same $\mathcal{I}_i$ has the same response.
    
    Notice that such structure-preserving property only depends on the Laplacian spectral embedding $V_k$. For $V_k$ that is not $k$-block-ideal, we should be able to find a $\hat{V}_k$ close to $V_k$ and is $k$-block-ideal. This gives rise to the following optimization problem:
    \begin{align}
        \min_{ S,\{\mathcal{I}_i\}_{i=1}^k } \|V_k-P_{ \{\mathcal{I}_i\}_{i=1}^k } S\|_F^2,\quad
        s.t.\quad Se_1=\one_k/\sqrt{n},\ S^T\dg\{|\mathcal{I}_i|\}_{i=1}^kS=I_k \label{eq_V_fit_opt}\,.
    \end{align}
    The resulting $\hat{V}_k=P_{ \{\mathcal{I}_i\}_{i=1}^k } S$ is a refinement of $V_k$ that is $k$-ideal, and the constraints in \eqref{eq_V_fit_opt} ensures that the first column of $\hat{V}_k$ is $\one_n/\sqrt{n}$ and that $\hat{V}_k^T\hat{V}_k=I_k$. Now
    $$
        \hat{T}_k(s)=\hat{V}_k(\hat{V}_k^TG^{-1}(s)\hat{V}_k+f(s)\Lambda_k)^{-1}\hat{V}_k^T
    $$
    is structure-preserving by Theorem \ref{thm_k_ideal_T_k}. In the optimization problem \eqref{eq_V_fit_opt}, the need for identifying coherent groups is implicitly suggested by the fact that we are optimizing over all possible $k$-way partitions of $n$, and the reduced network interconnection is constructed by jointly optimizing over invertible $S$. 
    
    Generally, \eqref{eq_V_fit_opt} is hard to solve. Notice, however, that given a fixed partition $\{\mathcal{I}_i\}_{i=1}^k$, one can find a closed-form solution (\ifthenelse{\boolean{archive}}{See Appendix~\ref{app_lse_refine}}{We refer the readers to~\citet{min2022b} for the details}) to the following optimization problem
    \begin{align}
        \min_{ S} \|V_k-P_{ \{\mathcal{I}_i\}_{i=1}^k } S\|_F^2,\quad
        s.t.\quad Se_1=\one_k/\sqrt{n},\ S^T\dg\{|\mathcal{I}_i|\}_{i=1}^kS=I_k \label{eq_V_fit_opt_fix_part}\,.
    \end{align}
    This suggests that a computationally efficient way to find a sub-optimal solution to \eqref{eq_V_fit_opt}: First, we use any spectral clustering algorithm to find a good partition/clustering $\{\mathcal{I}_i\}_{i=1}^k$, then refine the spectral embedding $V_k$ by optimizing \eqref{eq_V_fit_opt_fix_part} with the obtained partition, resulting in our Algorithm \ref{algo_approx_model}. 

\section{Performance Analysis}\label{sec_analysis}
In this section, we provide an error bound on $\sup_{s\in(-j\eta,j\eta)}\|T_{yu}(s)-\hat{T}_k(s)\|$ for our proposed approximation model $\hat{T}_k(s)$ from Algorithm \ref{algo_approx_model}. As we discussed in Section \ref{ssec_low_rank_approx}, such error measure is related to how close the time-domain response of $\hat{T}_k(s)$ is to the one of $T_{yu}(s)$ when subjected to low-frequency inputs. We consider a Laplacian sampled from a stochastic weighted block model. 
\subsection{Weighted Stochastic Block Model}
We first discuss how we sample our Laplacian matrix from a weighted stochastic block model $(\{\mathcal{I}_i\}_{i=1}^k,Q,W)$. Here, $\{\mathcal{I}_i\}_{i=1}^k$ is a $k$-way partition of $[n]$, $Q\in [0,1]^{k\times k}$, and $W\in \mathbb{R}_{\geq 0}^{k\times k}$, where $Q_{ij}=Q_{ji},W_{ij}=W_{ji}$. We let $(j)$ denote the \emph{block membership} of node $j$: when $j\in\mathcal{I}_i$, then $(j)=i$. The adjacency matrix $A$ is sampled as  follows:
\begin{align}
    A_{ij}=\begin{cases}
        W_{(i),(j)}, & \text{with probability } Q_{(i),(j)}\\
        0, &\text{with probability } 1-Q_{(i),(j)}
    \end{cases},\quad i\geq j,\quad\quad A_{ij}=A_{ji},\quad i<j\,.
\end{align}
That is, each (undirected) edge ${i,j}$ appears independently with probability $Q_{(i),(j)}$ that is determined by the block membership of node $i,j$, and has weight $W_{(i),(j)}$ if it appears. Then we have the Laplacian matrix $L$:
\be
    L=D_A-A,\quad D_A=\dg\{A\one\}\,.
\ee
\subsection{Approximation Error Bound}
Given the network model $(G(s),L,f(s))$ with $L$ sampled from a weighted stochastic block model $(\{\mathcal{I}_i\}_{i=1}^k,Q,W)$, we show that under certain assumptions, the error $\sup_{s\in(-j\eta,j\eta)}\|T_{yu}(s)-\hat{T}_k(s)\|$ is small with high probability when the network size is sufficiently large. We start by stating our assumptions.
\begin{assumption}\label{assump_net_model}
    For our network model $(G(s),L,f(s))$ with $L$ sampled from a weighted stochastic block model $(\{\mathcal{I}_i\}_{i=1}^k,Q,W)$, we assume the following:
    \begin{enumerate}[leftmargin=0.5cm]
        \item All $g_i(s),f(s)$ are rational. Moreover, node dynamics are \textbf{output strictly passive}: There exists $\gamma>0$, such that for $i=1,\cdots,n$,
        $
            Re(g_i(s))\geq \frac{1}{\gamma} |g_i(s)|^2, \forall Re(s)>0\,,
        $
        and network coupling $f(s)$ is \textbf{positive real}: $Re(f(s))>0,\forall Re(s)>0$, and $Im(f(s))=0,\forall Re(s)=0$
        \item The node dynamics satisfies that for any $\eta>0$, there exists $M(\eta)$ such that for $i=1,\cdots,n$
        \be
             \sup_{s\in(-j\eta,+j\eta)}|g_i^{-1}(s)|\leq M(\eta)\,.
        \ee
        The network coupling $f(s)$ satisfies that $F_l(\eta):=\inf_{s\in(-j\eta,+j\eta)}|f(s)|$ is positive for all $\eta>0$.
        \item The blocks are approximately balanced: 
        \be
            \frac{n_{\max}}{n_{\min}}\leq \rho\,,\label{eq_assump_blk_bal}
        \ee
        for some $\rho\geq 1$, where $n_{\max}:=\max_{1\leq i\leq k}|\mathcal{I}_i|$ and $n_{\min}:=\min_{1\leq i\leq k}|\mathcal{I}_i|$,

        \item The network has a stronger intra-block connection than the inter-block one: 
        \be
            \min_{i}B_{ii}-2\rho \max_{i}\sum_{j\neq i}B_{ij}\geq \Delta\,,\label{eq_assump_con_diff}
        \ee
        for some $\Delta>0$, where $B=Q\odot W$. ($\odot$ is the Hadamard product)
        \end{enumerate}
\end{assumption}
The first assumption ensures the network $T_{yu}(s)$ and our approximation model $\hat{T}_k(s)$ are stable. The second assumption ensures that our low-rank approximation $T_k(s)$ in Theorem \ref{thm_Tk} is valid on the interval of our interest $(-j\eta, +j\eta)$. The third assumption ensures our problem is non-degenerate: if the size of one block is too small, the network effectively has $k-1$ clusters. Such an assumption is standard in analyzing the consistency of spectral clustering algorithms on stochastic block models~\citep{lyzinski2014perfect,ahn2018hypergraph}. Lastly, since we are interested in networks  containing multiple groups of nodes such that within each group, the nodes are tightly-connected while between groups, the nodes are weakly-connected, the fourth assumption formally characterizes such a property.

In our algorithm, a spectral clustering algorithm is used to find a partition $\{\mathcal{I}_i\}_{i=1}^k$ that is used for aggregating node dynamics and constructing the reduced network. Ideally, we want some consistency property on the obtained partition.

\begin{assumption}\label{assump_sc_algo}
    Given $L$ sampled from a weighted stochastic block model $(\{\mathcal{I}_i\}_{i=1}^k,Q,W)$ satisfying Assumption \ref{assump_net_model}, we have an asymptotically consistent spectral clustering algorithm in Algorithm \ref{algo_approx_model}: For any $\delta>0$, there exists $\tilde{N}(\delta)$ such that for network with size $n>\tilde{N}(\delta)$, the spectral clustering algorithm on $L$ returns the true $\{\mathcal{I}_i\}_{i=1}^k$ partition with probability at least $1-\delta$.
\end{assumption}
Formally justifying this assumption for some spectral clustering algorithms is an interesting future research topic. Nonetheless, such a consistency result has been studied for spectral clustering algorithms on the adjacency matrix from the stochastic block model~\citep{lyzinski2014perfect} and the weighted stochastic block model~\citep{ahn2018hypergraph}. 

With these assumptions, we have the following theorem regarding the error bound.
\begin{theorem}\label{thm_err}
    Consider the network model $(G(s),L,f(s))$ with $L$ sampled from a weighted stochastic block model $(\{\mathcal{I}_i\}_{i=1}^k,Q,W)$. If Assumption \ref{assump_net_model} and Assumption \ref{assump_sc_algo} hold, then Algorithm \ref{algo_approx_model} returns a $\hat{T}_k(s)$ such that 
    \begin{enumerate}[leftmargin=0.5cm]
        \item $\|T_{yu}(s)\|_{\mathcal{H}_\infty}\leq \gamma$, $\|\hat{T}_k(s)\|_{\mathcal{H}_\infty}\leq \gamma$;
        \item For any $\eta>0$, $0<\delta<1$ and $\epsilon>0$, there exists an $N(\delta,\epsilon,\tilde{N}(\delta/2),\gamma,M(\eta), F_l(\eta),\rho,Q,W)$ such that for network with size $n\geq N$, with probability at least $1-\delta$, we have
        \be
            \sup_{s\in(-j\eta,+j\eta)}\|T_{yu}(s)-\hat{T}_k(s)\|\leq \epsilon\,.
        \ee
    \end{enumerate}
\end{theorem}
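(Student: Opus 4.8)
The plan is to treat the two claims separately: the stability bounds in item~1 are deterministic and follow from passivity, while item~2 is handled by a triangle-inequality decomposition that reduces everything to a spectral analysis of the random Laplacian.

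For item~1 I would argue by passivity. Writing the original closed loop as $u = (G^{-1}(s) + f(s)L)y$ and pairing with $y$ gives, for $Re(s)>0$,
\[
    Re\langle y, u\rangle = \sum_i |y_i|^2\,Re(g_i^{-1}(s)) + Re(f(s))\,y^*Ly \geq \tfrac{1}{\gamma}\|y\|^2,
\]
since output-strict-passivity yields $Re(g_i^{-1}(s)) = Re(g_i(s))/|g_i(s)|^2 \geq 1/\gamma$, while positive-realness of $f$ together with $L$ being symmetric positive semidefinite makes the second term nonnegative. Cauchy--Schwarz then gives $\|y\|\leq\gamma\|u\|$; the same lower bound shows $G^{-1}+fL$ is nonsingular throughout the closed right half-plane, so $T_{yu}$ has no poles there and $\|T_{yu}\|_{\mathcal{H}_\infty}\leq\gamma$. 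The identical computation applied to the reduced loop, using the orthonormality $\hat{V}_k^T\hat{V}_k=I_k$ forced by the constraint $S^T\dg\{|\mathcal{I}_i|\}_{i=1}^k S=I_k$ in \eqref{eq_V_fit_opt_fix_part} and $Re(\hat g_i^{-1})=\sum_{j\in\mathcal{I}_i}Re(g_j^{-1})\geq |\mathcal{I}_i|/\gamma\geq 1/\gamma$, gives $\|\hat T_k\|_{\mathcal{H}_\infty}\leq\gamma$. The same estimate also yields $\|T_k(j\omega)\|\leq\gamma$, a fact I reuse below.

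For item~2, introduce the intermediate low-rank model $T_k(s)=V_k(V_k^TG^{-1}V_k+f\Lambda_k)^{-1}V_k^T$ and split
\[
    \|T_{yu}(s)-\hat{T}_k(s)\| \leq \|T_{yu}(s)-T_k(s)\| + \|T_k(s)-\hat{T}_k(s)\|.
\]
The first term is bounded uniformly on $(-j\eta,j\eta)$ by Theorem~\ref{thm_Tk} with $M_1=\|T_k\|\leq\gamma$ and $M_2\leq M(\eta)$, giving an $\mathcal{O}(1/(F_l(\eta)\lambda_{k+1}(L)))$ bound, so it suffices to show $\lambda_{k+1}(L)\to\infty$. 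For the second term, note that by Theorem~\ref{thm_k_ideal_T_k} both $T_k$ and $\hat T_k$ are the value of the single map $F(V):=V(V^TG^{-1}V+f\Lambda_k)^{-1}V^T$ at $V_k$ and at the block-ideal refinement $\hat V_k=P_{\{\mathcal{I}_i\}_{i=1}^k}S$; since the inner resolvent has norm at most $\gamma$ and $\|G^{-1}\|\leq M(\eta)$, a routine resolvent-perturbation estimate shows $F$ is Lipschitz on isometries, so $\|T_k-\hat T_k\|\leq C(\gamma,M(\eta))\,\|V_k-\hat V_k\|_F$, and it remains only to bound $\|V_k-\hat V_k\|_F$ on the correctly recovered partition.

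The crux, and the step I expect to be the main obstacle, is the spectral analysis. I would compare $L$ with the population Laplacian $\mathcal{L}=\mathbb{E}[D_A]-\mathbb{E}[A]$, whose expected adjacency is block-constant with blocks $B=Q\odot W$. Its bottom-$k$ eigenspace is exactly the block-constant subspace $\mathrm{range}(P_{\{\mathcal{I}_i\}_{i=1}^k})$: on block-constant vectors $\mathcal{L}$ restricts to a $k\times k$ inter-block Laplacian with eigenvalues $\mathcal{O}(n_{\max}\max_i\sum_{j\neq i}B_{ij})$, while on within-block variations its eigenvalues are at least $n_{\min}\min_i B_{ii}$. Assumption~\ref{assump_net_model}.4 via \eqref{eq_assump_con_diff}, combined with the balance bound \eqref{eq_assump_blk_bal}, is precisely what makes the latter dominate the former, producing a spectral gap $\lambda_{k+1}(\mathcal{L})-\lambda_k(\mathcal{L})=\Omega(n\Delta)$. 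A matrix-concentration estimate (matrix Bernstein for $A-\mathbb{E}[A]$ plus a union bound over degree deviations) gives $\|L-\mathcal{L}\|=\mathcal{O}_p(\sqrt{n\log n})$, whence Weyl's inequality yields $\lambda_{k+1}(L)=\Omega_p(n)\to\infty$ (killing the first error term), and Davis--Kahan places $\mathrm{range}(V_k)$ within angle $\mathcal{O}_p(\log n/(\sqrt{n}\,\Delta))$ of the block-constant subspace. The two delicate points I anticipate are (i) arguing that the optimal $S$ of \eqref{eq_V_fit_opt_fix_part}, with the gauge pinned by $Se_1=\one_k/\sqrt{n}$, realizes the Davis--Kahan alignment so that $\|V_k-\hat V_k\|_F=\mathcal{O}_p(\log n/(\sqrt{n}\,\Delta))$ rather than incurring extra error, and (ii) tracking explicit constants through concentration and perturbation to extract the threshold $N(\delta,\epsilon,\tilde N(\delta/2),\gamma,M(\eta),F_l(\eta),\rho,Q,W)$. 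Finally, conditioning on the clustering-consistency event of Assumption~\ref{assump_sc_algo} (probability $\geq 1-\delta/2$) and the concentration event (probability $\geq 1-\delta/2$), both error terms fall below $\epsilon/2$ for $n\geq N$ on the intersection, which has probability at least $1-\delta$.
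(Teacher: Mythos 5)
Your proposal follows essentially the same route as the paper's proof: passivity gives the $\mathcal{H}_\infty$ bounds (the paper's Lemma~\ref{lem_Tk_hinf_bd}), the triangle-inequality split through $T_k(s)$ is handled by Theorem~\ref{thm_Tk} plus a Lipschitz-in-$V_k$ perturbation bound (Lemma~\ref{lem_T_k_approx_err}), and the random-graph part is exactly the paper's comparison with $L_{\text{blk}}$ via concentration, Weyl, and Davis--Kahan (Lemmas~\ref{lem_err_prob_bd} and~\ref{lem_spec_L_blk}), finished by the same $\delta/2$-plus-$\delta/2$ union bound over the clustering-consistency and concentration events. Your anticipated obstacle (i) --- that the constrained optimizer of \eqref{eq_V_fit_opt_fix_part} realizes the Davis--Kahan alignment rather than incurring extra error --- is precisely what the paper's Lemma~\ref{lem_fro_bd_sin} resolves, by exhibiting the aligned matrix $S=\tilde{S}O$ as a feasible point and invoking optimality of $S^*$.
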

\begin{proof}[Proof Sketch]
    For the stability of $T_{yu}(s)$ and $\hat{T}_k(s)$, the proof is similar to the one in~\citet{min2021a} and uses the assumption $g_i(s)$ are output strictly passive and $f(s)$ is positive real. The error bound relies on that the sampled Laplacian matrix $L$ is close to one that is easy to analyze: Let $A_{\text{blk}}$ be the expected value of the adjacency matrix $A$ from the block model, and we can construct a Laplacian matrix $L_{\text{blk}}=D_{A_{\text{blk}}}-A_{\text{blk}},\ D_{A_\text{blk}}=\dg\{A_{\text{blk}}\one\}$. $L_\text{blk}$ has (by \eqref{eq_assump_blk_bal}\eqref{eq_assump_con_diff} in Assumption \ref{assump_net_model}) all the desired properties: 1) $\lambda_{k+1}(L_{\text{blk}})$ grows linearly in network size $n$; 2) $L_\text{blk}$ is $k$-block-ideal.~\citet{min2022} has shown that under the weighted stochastic block model, $\|L-L_{\text{blk}}\|\sim\mathcal{O}_p(\sqrt{n\log n})$, which is sufficient to show that 1) $\lambda_{k+1}(L)\sim \Omega_p(n)$ by Weyl's inequality~\citep{Horn:2012:MA:2422911}; 2) $L$ is approximately $k$-block-ideal by Davis-Khan theorem~\citep{Yu2014}. The former shows that the error between $T_{yu}(s)$ and $T_k(s)$ is small w.h.p. by Theorem \ref{algo_approx_model} and the latter ensures the error between $T_k(s)$ and $\hat{T}_k(s)$ is small w.h.p. \ifthenelse{\boolean{archive}}{}{We refer the readers to~\citet{min2022b} for the full proof.}
\end{proof}
\ifthenelse{\boolean{archive}}{The proof is shown in Appendix~\ref{app_pf_thm_err}. }{}
Theorem \ref{thm_err} shows that our algorithms perform well for large networks with multiple coherent clusters, it also implies that the collective dynamic behavior of such networks can be modeled as a structured reduced network. This suggests a new avenue for data-driven system identification for such networks where only the reduced network model is learned from the data collected from the network.
\section{Numerical Experiments}\label{sec_num}
The frequency response of synchronous generator (including grid-forming inverters) networks, linearized at its equilibrium point~\citep{zhao2013power}, can be modeled exactly as the network model in Fig \ref{fig_blk_digm} with $f(s)=\frac{1}{s}$ and second order node dynamics $g_i(s)$. 
We validate our algorithm with a synthetic test case, where the coefficients of generator dynamics are randomly sampled.
The network adjacency matrix $A$ is sampled from our weighted stochastic block model $(\{\mathcal{I}_i\}_{i=1}^k,Q,W)$ with $k=3$, and
\be
    \bmt |\mathcal{I}_1| & 0 &0\\
     0 & |\mathcal{I}_2| &0\\
     0 & 0 & |\mathcal{I}_3|\emt=\bmt 20 & 0 & 0\\ 0& 40 &0 \\ 0 & 0& 20\emt, Q=\bmt 0.8 & 0.1 & 0.1\\ 0.1& 0.8 &0.1 \\ 0.1 & 0.1& 0.8\emt, W=\bmt 20 & 0.4 & 0.8\\ 0.4& 20 &0.7 \\ 0.8 & 0.7& 20\emt\,.
\ee
We use the spectral clustering algorithm proposed in~\citet{bach2004learning}. Since the network size is not sufficiently large for the algorithm to return a true partition with high probability, when we run the experiments with multiple random seeds, we see a small fraction of the runs in which the algorithm fails to cover the true partition. For the case when the spectral clustering algorithm succeeds, we inject a step disturbance $u_2(t)=\chi(t)$ at the second node of the network and plot the step response of $T_{yu}(s)$ in Fig \ref{fig_syn}, along with the response $\hat{y}$ of our approximate model $\hat{T}_3(s)$ from Algorithm \ref{algo_approx_model}. There is a clear difference between the dynamical response of generators from different groups, and the aggregate responses $\hat{y}$ capture such difference while providing a good approximation to the actual node responses.
Due to space constraints, we only present the result of running Algorithm \ref{algo_approx_model} on one instance of the randomly generated networks, but the results are consistent across multiple runs as long as the spectral clustering succeeds. 
\begin{figure}[!h]
    \centering
    \includegraphics[width=\columnwidth]{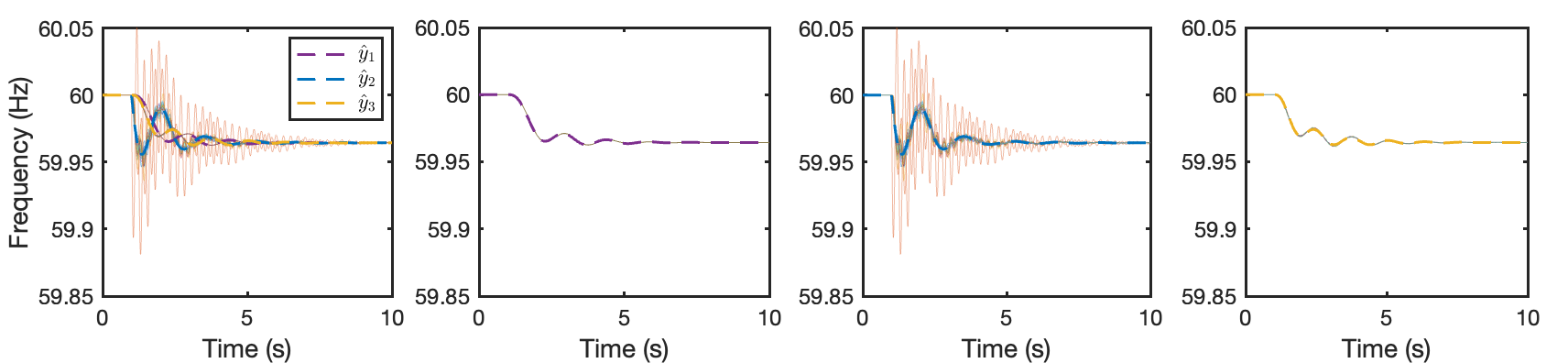}
    \caption{Left most plot shows the step response of $T_{yu}(s)$ (solid lines) and $\hat{T}_k(s)$ (dashed lines) from algorithm \ref{algo_approx_model}. The three plots on the right show the response for each identified group $\mathcal{I}_{i}$. The node injected with step disturbance is in the group $2$.}
    \label{fig_syn}
\end{figure}

\section{Conclusion}
In this paper, we propose a structure-preserving model-reduction methodology for large-scale dynamic networks based on a recent frequency-domain characterization of coherent dynamics in networked systems. Our analysis shows that networks with multiple coherent groups can be well approximated by a reduced network of the same size as the number of coherent groups, and we provide an upper bound on the approximation error when the network graph is randomly generated from a weight stochastic block model. We believe our proposed model can be applied to power networks for studying the inter-area oscillation in the frequency response and allows new control designs based on the reduced network.
\acks{This work was supported by the NSF HDR TRIPODS Institute for the Foundations of Graph and Deep Learning (NSF grant 1934979), the ONR MURI Program (ONR grant 503405-78051), the NSF CPS Program (NSF grant 2136324), and the NSF CAREER Program (NSF grant 1752362). The authors thank Professor Steven Low for the insightful discussion that inspired this work. The authors also thank Bala Kameshwar Poolla and Dhananjay Anand for carefully reading our manuscript and for providing many insightful comments and suggestions.}

\bibliography{ref}

\begin{thebibliography}{26}
\providecommand{\natexlab}[1]{#1}
\providecommand{\url}[1]{\texttt{#1}}
\expandafter\ifx\csname urlstyle\endcsname\relax
  \providecommand{\doi}[1]{doi: #1}\else
  \providecommand{\doi}{doi: \begingroup \urlstyle{rm}\Url}\fi

\bibitem[Ahn et~al.(2018)Ahn, Lee, and Suh]{ahn2018hypergraph}
Kwangjun Ahn, Kangwook Lee, and Changho Suh.
\newblock Hypergraph spectral clustering in the weighted stochastic block
  model.
\newblock \emph{IEEE Journal of Selected Topics in Signal Processing},
  12\penalty0 (5):\penalty0 959--974, 2018.

\bibitem[Bach and Jordan(2004)]{bach2004learning}
Francis~R Bach and Michael~I Jordan.
\newblock Learning spectral clustering.
\newblock In \emph{Advances in Neural Information Processing Systems}, pages
  305--312, 2004.

\bibitem[Bamieh et~al.(2012)Bamieh, Jovanovic, Mitra, and
  Patterson]{Bamieh2012}
Bassam Bamieh, Mihailo~R Jovanovic, Partha Mitra, and Stacy Patterson.
\newblock Coherence in large-scale networks: Dimension-dependent limitations of
  local feedback.
\newblock \emph{{IEEE} Trans. Automat. Contr.}, 57\penalty0 (9):\penalty0
  2235--2249, 2012.

\bibitem[Chow(1982)]{chow1982time}
Joe~H Chow.
\newblock \emph{Time-scale modeling of dynamic networks with applications to
  power systems}.
\newblock Springer, 1982.

\bibitem[Ekomwenrenren et~al.(2021)Ekomwenrenren, Tang, Simpson-Porco,
  Farantatos, Patel, and Hooshyar]{ekomwenrenren2021}
Etinosa Ekomwenrenren, Zhiyuan Tang, John~W. Simpson-Porco, Evangelos
  Farantatos, Mahendra Patel, and Hossein Hooshyar.
\newblock Hierarchical coordinated fast frequency control using inverter-based
  resources.
\newblock \emph{IEEE Transactions on Power Systems}, 36\penalty0 (6):\penalty0
  4992--5005, 2021.
\newblock \doi{10.1109/TPWRS.2021.3075641}.

\bibitem[Fritzsch and Jacquod(2022)]{fritzsch2022}
Julian Fritzsch and Philippe Jacquod.
\newblock Long wavelength coherency in well connected electric power networks.
\newblock \emph{IEEE Access}, 10:\penalty0 19986--19996, 2022.
\newblock \doi{10.1109/ACCESS.2022.3151687}.

\bibitem[H{\"a}berle et~al.(2022)H{\"a}berle, Prieto-Araujo, Tayyebi, and
  D{\"o}rfler]{haberle2022grid}
Verena H{\"a}berle, Eduardo Prieto-Araujo, Ali Tayyebi, and Florian
  D{\"o}rfler.
\newblock Grid-forming control design of dynamic virtual power plants.
\newblock \emph{arXiv preprint arXiv:2202.02057}, 2022.

\bibitem[Horn and Johnson(2012)]{Horn:2012:MA:2422911}
Roger~A. Horn and Charles~R. Johnson.
\newblock \emph{Matrix Analysis}.
\newblock Cambridge University Press, New York, NY, USA, 2nd edition, 2012.
\newblock ISBN 0521548233, 9780521548236.

\bibitem[Jiang et~al.(2021{\natexlab{a}})Jiang, Bernstein, Vorobev, and
  Mallada]{jbvm2021lcss}
Yan Jiang, Andrey Bernstein, Petr Vorobev, and Enrique Mallada.
\newblock Grid-forming frequency shaping control in low inertia power systems.
\newblock \emph{IEEE Control Systems Letters (L-CSS)}, 5\penalty0 (6):\penalty0
  1988--1993, 12 2021{\natexlab{a}}.
\newblock \doi{10.1109/LCSYS.2020.3044551}.

\bibitem[Jiang et~al.(2021{\natexlab{b}})Jiang, Pates, and Mallada]{jpm2021tac}
Yan Jiang, Richard Pates, and Enrique Mallada.
\newblock Dynamic droop control in low inertia power systems.
\newblock \emph{IEEE Transactions on Automatic Control}, 66\penalty0
  (8):\penalty0 3518--3533, 8 2021{\natexlab{b}}.
\newblock \doi{10.1109/TAC.2020.3034198}.

\bibitem[Lall et~al.(2003)Lall, Krysl, and Marsden]{LALL2003304}
Sanjay Lall, Petr Krysl, and Jerrold~E Marsden.
\newblock Structure-preserving model reduction for mechanical systems.
\newblock \emph{Physica D: Nonlinear Phenomena}, 184\penalty0 (1):\penalty0
  304--318, 2003.
\newblock ISSN 0167-2789.
\newblock \doi{https://doi.org/10.1016/S0167-2789(03)00227-6}.
\newblock Complexity and Nonlinearity in Physical Systems -- A Special Issue to
  Honor Alan Newell.

\bibitem[Li and Bai(2006)]{Li2006}
Ren-Cang Li and Zhaojun Bai.
\newblock Structure-preserving model reduction.
\newblock In Jack Dongarra, Kaj Madsen, and Jerzy Wa{\'{s}}niewski, editors,
  \emph{Applied Parallel Computing. State of the Art in Scientific Computing},
  pages 323--332, Berlin, Heidelberg, 2006. Springer Berlin Heidelberg.
\newblock ISBN 978-3-540-33498-9.

\bibitem[Lyzinski et~al.(2014)Lyzinski, Sussman, Tang, Athreya, and
  Priebe]{lyzinski2014perfect}
Vince Lyzinski, Daniel~L Sussman, Minh Tang, Avanti Athreya, and Carey~E
  Priebe.
\newblock Perfect clustering for stochastic blockmodel graphs via adjacency
  spectral embedding.
\newblock \emph{Electronic journal of statistics}, 8\penalty0 (2):\penalty0
  2905--2922, 2014.

\bibitem[{Min} and {Mallada}(2019)]{min2019cdc}
H.~{Min} and E.~{Mallada}.
\newblock Dynamics concentration of large-scale tightly-connected networks.
\newblock In \emph{IEEE 58th Conf. on Decision and Control}, pages 758--763,
  2019.

\bibitem[{Min} et~al.(2021){Min}, {Paganini}, and {Mallada}]{min2020lcss}
H.~{Min}, F.~{Paganini}, and E.~{Mallada}.
\newblock Accurate reduced-order models for heterogeneous coherent generators.
\newblock \emph{IEEE Contr. Syst. Lett.}, 5\penalty0 (5):\penalty0 1741--1746,
  2021.
\newblock \doi{10.1109/LCSYS.2020.3043733}.

\bibitem[Min and Mallada(2022{\natexlab{a}})]{min2022}
Hancheng Min and Enrique Mallada.
\newblock Spectral clustering and model reduction for weakly-connected coherent
  network systems.
\newblock \emph{arXiv preprint arXiv:2209.13701}, 2022{\natexlab{a}}.

\bibitem[Min and Mallada(2022{\natexlab{b}})]{min2022b}
Hancheng Min and Enrique Mallada.
\newblock Learning coherent clusters in weakly-connected network systems.
\newblock \emph{arXiv preprint arXiv:2211.15301}, 2022{\natexlab{b}}.

\bibitem[Min et~al.(2021)Min, Pates, and Mallada]{min2021a}
Hancheng Min, Richard Pates, and Enrique Mallada.
\newblock Coherence and concentration in tightly-connected networks.
\newblock \emph{arXiv preprint arXiv:2101.00981}, 2021.

\bibitem[Olfati-Saber and Murray(2004)]{Olfati-Saber20041520}
R.~Olfati-Saber and R.M. Murray.
\newblock Consensus problems in networks of agents with switching topology and
  time-delays.
\newblock \emph{{IEEE} Trans. Automat. Contr.}, 49\penalty0 (9):\penalty0
  1520--1533, 2004.
\newblock \doi{10.1109/TAC.2004.834113}.

\bibitem[Ramaswamy et~al.(1995)Ramaswamy, Verghese, Rouco, Vialas, and
  DeMarco]{ramaswamy1995}
G.N. Ramaswamy, G.C. Verghese, L.~Rouco, C.~Vialas, and C.L. DeMarco.
\newblock Synchrony, aggregation, and multi-area eigenanalysis.
\newblock \emph{IEEE Transactions on Power Systems}, 10\penalty0 (4):\penalty0
  1986--1993, 1995.
\newblock \doi{10.1109/59.476067}.

\bibitem[Ramaswamy et~al.(1996)Ramaswamy, Rouco, Fillatre, Verghese,
  Panciatici, Lesieutre, and Peltier]{ramaswamy1996}
G.N. Ramaswamy, L.~Rouco, O.~Fillatre, G.C. Verghese, P.~Panciatici, B.C.
  Lesieutre, and D.~Peltier.
\newblock Synchronic modal equivalencing (sme) for structure-preserving dynamic
  equivalents.
\newblock \emph{IEEE Transactions on Power Systems}, 11\penalty0 (1):\penalty0
  19--29, 1996.
\newblock \doi{10.1109/59.485977}.

\bibitem[Romeres et~al.(2013)Romeres, Dörfler, and Bullo]{romeres2013}
Diego Romeres, Florian Dörfler, and Francesco Bullo.
\newblock Novel results on slow coherency in consensus and power networks.
\newblock In \emph{2013 European Control Conference (ECC)}, pages 742--747,
  2013.
\newblock \doi{10.23919/ECC.2013.6669400}.

\bibitem[Safaee and Gugercin(2021)]{Bita2021}
Bita Safaee and Serkan Gugercin.
\newblock Structure-preserving model reduction of parametric power networks.
\newblock In \emph{2021 American Control Conference (ACC)}, pages 1824--1829,
  2021.
\newblock \doi{10.23919/ACC50511.2021.9483168}.

\bibitem[Tyuryukanov et~al.(2021)Tyuryukanov, Popov, van der Meijden, and
  Terzija]{tyuryukanov2021}
Ilya Tyuryukanov, Marjan Popov, Mart A. M.~M. van der Meijden, and Vladimir
  Terzija.
\newblock Slow coherency identification and power system dynamic model
  reduction by using orthogonal structure of electromechanical eigenvectors.
\newblock \emph{IEEE Transactions on Power Systems}, 36\penalty0 (2):\penalty0
  1482--1492, 2021.
\newblock \doi{10.1109/TPWRS.2020.3009628}.

\bibitem[Yu et~al.(2014)Yu, Wang, and Samworth]{Yu2014}
Yi~Yu, Tengyao Wang, and Richard~J. Samworth.
\newblock A useful variant of the davis--kahan theorem for statisticians, 2014.

\bibitem[Zhao et~al.(2013)Zhao, Topcu, Li, and Low]{zhao2013power}
Changhong Zhao, Ufuk Topcu, Na~Li, and S~Low.
\newblock Power system dynamics as primal-dual algorithm for optimal load
  control.
\newblock \emph{arXiv preprint arXiv:1305.0585}, 2013.

\end{thebibliography}

\ifthenelse{\boolean{archive}}{\newpage
\appendix

\section{Solution to the Laplacian spectral embedding refinement}\label{app_lse_refine}
In this section, we derive the analytical solution to \eqref{eq_V_fit_opt_fix_part}:
\begin{align*}
        \min_{ S\in\mathbb{R}^{k\times k}}&\;\quad \|V_k-P_{ \{\mathcal{I}_i\}_{i=1}^k } S\|_F^2\\
        s.t.\quad &\; Se_1=\one_k/\sqrt{n}\nonumber\\
        &\; S^T\dg\{|\mathcal{I}_i|\}_{i=1}^kS=I_k\nonumber\,.
    \end{align*}
First of all, there is nothing to optimize in the first column of $S$, since $S$ must be of the form $S=\bmt \frac{\one_k}{\sqrt{n}} & \tilde{S}\emt$ for some $\tilde{S}\in\mathbb{R}^{k\times (k-1)}$. Since $V_k=\bmt \frac{\one_n}{\sqrt{n}}& \tilde{V}_k\emt$ with $\tilde{V}_k=\bmt v_2(L)& \cdots & v_k(L)\emt$, solving \eqref{eq_V_fit_opt_fix_part} is equivalent to solving 
\begin{align}
    \min_{ \tilde{S}\in\mathbb{R}^{k\times (k-1)}}&\;\quad \|\tilde{V}_k-P_{ \{\mathcal{I}_i\}_{i=1}^k } \tilde{S}\|_F^2\label{eq_V_fit_opt_fix_part_equiv1}\\
    s.t.\quad &\; \tilde{S}^T\one_k=0\nonumber\\
    &\; \tilde{S}^T\dg\{|\mathcal{I}_i|\}_{i=1}^k\tilde{S}=I_k\nonumber\,,
\end{align}
where the first constraint in \eqref{eq_V_fit_opt_fix_part} is removed by excluding the first column of $S$, and the second constraint in \eqref{eq_V_fit_opt_fix_part} is rewritten as the two constraints in \eqref{eq_V_fit_opt_fix_part_equiv1}.

Now define $\tilde{O}:=\dg\{\sqrt{|\mathcal{I}_i|}\}_{i=1}^k\tilde{S}\in\mathbb{R}^{k\times(k-1)}$ and $\tilde{P}_{ \{\mathcal{I}_i\}_{i=1}^k }=P_{ \{\mathcal{I}_i\}_{i=1}^k} \dg\{(\sqrt{|\mathcal{I}_i|})^{-1}\}_{i=1}^k$, it is easy to see that \eqref{eq_V_fit_opt_fix_part_equiv1} is equivalent to:
\begin{align}
    \min_{ \tilde{O}\in\mathbb{R}^{k\times (k-1)}}&\;\quad \|\tilde{V}_k-\tilde{P}_{ \{\mathcal{I}_i\}_{i=1}^k } \tilde{O}\|_F^2\label{eq_V_fit_opt_fix_part_equiv2}\\
    s.t.\quad &\; \tilde{O}^Tu_{\{\mathcal{I}_i\}_{i=1}^k}=0\nonumber\\
    &\; \tilde{O}^T\tilde{O}=I_k\nonumber\,,
\end{align}
where $u_{\{\mathcal{I}_i\}_{i=1}^k}=\dg\{(\sqrt{|\mathcal{I}_i|})^{-1}\}_{i=1}^k\one_k$. Now let $Q\in\mathbb{R}^{k\times (k-1)}$ be some matrix such that $Q^TQ=I$ and $QQ^T=I-u_{\{\mathcal{I}_i\}_{i=1}^k}u^T_{\{\mathcal{I}_i\}_{i=1}^k}$, then $\{QO:\ O\in\mathbb{R}^{(k-1)\times (k-1)}, O^TO=OO^T=I_{k-1}$ are all the feasible solution to \eqref{eq_V_fit_opt_fix_part_equiv2}. Therefore \eqref{eq_V_fit_opt_fix_part_equiv2} is equivalent to
\begin{align}
    \min_{ O\in\mathbb{R}^{(k-1)\times (k-1)}}&\;\quad \|\tilde{V}_k-\tilde{P}_{ \{\mathcal{I}_i\}_{i=1}^k } QO\|_F^2\label{eq_V_fit_opt_fix_part_equiv3}\\
    s.t.\quad 
    &\; O^TO=I_{k-1}\nonumber\,.
\end{align}
Given the SVD: $Q^T\tilde{P}^T_{ \{\mathcal{I}_i\}_{i=1}^k }\tilde{V}_k=U\Sigma V^T$, the optimal solution to \eqref{eq_V_fit_opt_fix_part_equiv3} is $O^*=UV^T$. Then the optimal solution to the original problem \eqref{eq_V_fit_opt_fix_part} is 
\be
    S^*=\bmt \frac{\one_{k}}{\sqrt{n}}& \dg\{(\sqrt{|\mathcal{I}_i|})^{-1}\}_{i=1}^k QO^*\emt\,.
\ee

\newpage

\section{Proof of Theorem \ref{thm_Tk} and Theorem \ref{thm_k_ideal_T_k}}\label{app_pf_thm1_3}
\begin{proof}[Proof of Theorem \ref{thm_Tk}]
    Firstly, we have
    \begin{align}
        T_{yu}(s_0)=(I_n+G(s_0)f(s_0)L)^{-1}G(s_0)&=\; (G^{-1}(s_0)+f(s_0)L)^{-1}\\
        &=\; V(V^TG^{-1}(s_0)V+f(s_0)\Lambda )^{-1}V^T\,,
    \end{align}
    where $G^{-1}(s_0)=\dg\{g^{-1}_i(s_0)\}$, $\Lambda=\dg\{\lambda_i(L)\}$, and $V=\bmt v_1(L),V_k(L),\cdots,v_n(L)\emt$.
    
    Let $H=V^T\dg\{g_i^{-1}(s_0)\}V+f(s_0)\Lambda$, then 
    \ben
    T_{yu}(s_0) = VH^{-1}V^T.
    \een
    Then it is easy to see that
    \begin{align}
        \lV T_{yu}(s_0)-T_k(s_0)\rV= \lV T(s_0)-V\bmt H_k^{-1} &0 \\
        0& 0\emt V^T\rV
        &=\; \lV V\lp H^{-1}-\bmt H_k^{-1} &0 \\
        0& 0\emt\rp V^T\rV\nonumber\\
        &=\; \lV H^{-1}-\bmt H_k^{-1} &0 \\
        0& 0\emt \rV\,,\label{eq_T_H_norm_equiv}
    \end{align}
    where the last equality comes from the fact that multiplying by a unitary matrix $V$ preserves the spectral norm.
    
    Let $V_k:=\bmt\frac{\one}{\sqrt{n}} & V_k(L),\cdots,v_k(L)\emt$ and $ V_k^\perp:=\bmt v_{k+1}(L)& \cdots &v_n(L)\emt$, we now write $H$ in block matrix form:
    \begin{align*}
        H&=\;V^T\dg\{g_i^{-1}(s_0)\}V+f(s_0)\Lambda\\
        &= \bmt
            V_k^T\\
            (V_k^\perp)^T
        \emt \dg\{g_i^{-1}(s_0)\} \bmt
        V_k& V_k^\perp\emt+f(s_0)\Lambda\\
        &= {\small\bmt
        H_2& V_k^T\dg\{g_i^{-1}(s_0)\}V_k^\perp\\
        (V_k^\perp)^T\dg\{g_i^{-1}(s_0)\}V_k &(V_k^\perp)^T\dg\{g_i^{-1}(s_0)\}V_k^\perp+f(s_0)\Tilde{\Lambda}
        \emt}
        := \bmt
        H_k& H^T_o\\
        H_o & H_d
        \emt\,,
    \end{align*}
    where  $\Tilde{\Lambda}=\dg\{\lambda_{k+1}(L),\cdots,\lambda_n(L)\}$.
    
    Inverting $H$ in its block form, we have
    \ben
        H^{-1} = \bmt
        A &-A H_o^TH_d^{-1}\\
        -H_d^{-1}H_o A& H_d^{-1}+H_d^{-1}H_o A H_o^TH_d^{-1}
        \emt\,,
    \een
    where $A = (H_k-H_o^TH_d^{-1}H_o)^{-1}$.
    
    Notice that $||V_k^\perp||=1$ and $||V_k||=1$, we have
    \begin{align}
        \|H_o\|&=\; \lV (V_k\perp)^T\dg\{g_i^{-1}(s_0)\}V_k\rV\nonumber\\
        &\leq\; \|V_\perp\|\|\dg\{g_i^{-1}(s_0)\}\|\|V_k\|\leq M_2\,.\label{eq_h12_norm_bd}
    \end{align}

    Also, by Weyl's inequality~\citep{Horn:2012:MA:2422911}, when $|f(s_0)|\lambda_{k+1}(L)>M_2$, the following holds:
    \begin{align}
        \|H_d^{-1}\|&=\;\|(f(s_0)\Tilde{\Lambda}+ (V_k^\perp)^T\dg\{g_i^{-1}(s_0)\}V_k^\perp)^{-1}\|\nonumber\\
        &\leq\; \frac{1}{\sigma_{\min}(f(s_0)\Tilde{\Lambda})-\|(V_k^\perp)^T\dg\{g_i^{-1}(s_0)\}V_k^\perp\|}\nonumber\\
        &\leq\; \frac{1}{\sigma_{\min}(f(s_0)\Tilde{\Lambda})-M_2}\leq \frac{1}{|f(s_0)|\lambda_{k+1}(L)-M_2} \,.\label{eq_H22_norm_bd}
    \end{align}
    Lastly, when $|f(s_0)|\lambda_{k+1}(L)>M_2+M_2^2M_1$, a similar reasoning as above, using \eqref{eq_h12_norm_bd} \eqref{eq_H22_norm_bd}, and our assumption $\|T_k(s_0)\|=\|H_k^{-1}\|\leq M_1$, gives
    \begin{align}
        \|A\|\leq \frac{1}{\|H_k\|-\|H_o^TH_d^{-1}H_o\|}&\leq \frac{1}{\|H_k\|-\|H_o\|^2\|H_d^{-1}\|}\nonumber\\
        &\leq\; \frac{1}{\frac{1}{M_1}-\frac{M_2^2}{|f(s_0)|\lambda_{k+1}(L)-M_2}}\nonumber\\
        &= \;
        \frac{(|f(s_0)|\lambda_{k+1}(L)-M_2)M_1}{|f(s_0)|\lambda_{k+1}(L)-M_2-M_1M_2^2}\,.\label{eq_a_norm_bd}
    \end{align}
    
    Now we bound the norm of $H^{-1}-\bmt H_k^{-1} &0 \\
        0& 0\emt$ by the sum of norms of all its blocks:
    \begin{align}
        \lV H^{-1}-\bmt H_k^{-1} &0 \\
        0& 0\emt \rV
        =&\; \lV \bmt
        AH_o^TH_d^{-1}H_o H_2 &-AH_o^TH_d^{-1}\\
        -aH_d^{-1}H_o& H_d^{-1}+AH_d^{-1}H_oH_o^TH_d^{-1}
        \emt\rV\nonumber\\
        \leq &\; \|AH_o^TH_d^{-1}H_o H_2\|+2\|AH_d^{-1}H_o\|\nonumber\\
        &\; +\|H_d^{-1}+AH_d^{-1}H_oH_o^TH_d^{-1}\|\nonumber\\
        \leq &\; \|A\|\|H_d^{-1}\|(\|H_2\|\|H_o\|^2+2\|H_o\|+\|H_o\|^2\|H_d^{-1}\|) +\|H_d^{-1}\|\,,\label{eq_Hinv_norm_bd1}
    \end{align}
    Using \eqref{eq_h12_norm_bd}\eqref{eq_H22_norm_bd}\eqref{eq_a_norm_bd}, we can further upper bound \eqref{eq_Hinv_norm_bd1} as
    \begin{align}
        \lV H^{-1}-\bmt H_k^{-1} &0 \\
        0& 0\emt \rV
        \leq &\;  \frac{M_1^2M_2^2+2M_1M_2+\frac{M_1M_2^2}{|f(s_0)|\lambda_{k+1}(L)-M_2}}{|f(s_0)|\lambda_{k+1}(L)-M_2-M_1M_2^2}+\frac{1}{|f(s_0)|\lambda_{k+1}(L)-M_2}\nonumber\\
        =&\;\frac{\lp M_1M_2+1\rp^2}{|f(s_0)|\lambda_{k+1}(L)-M_2-M_1M_2^2}\,.\label{eq_Hinv_norm_bd2}
    \end{align}
    This bound holds as long as $|f(s_0)|\lambda_{k+1}(L)>M_2+M_2^2M_1$. Combining \eqref{eq_T_H_norm_equiv} and \eqref{eq_Hinv_norm_bd2} gives the desired inequality.
\end{proof}
\begin{proof}[Proof of Theorem \ref{thm_k_ideal_T_k}]
    Since
    \be
        T_k(s)=V_k(V_k^TG^{-1}(s)V_k+f(s)\Lambda_k)^{-1}V_k^T\,,
    \ee
    and
    \be
        V_k=P_{\{\mathcal{I}_i\}_{i=1}^k}S\,,
    \ee
    we have
    \begin{align*}
        T_k(s)&=\;P_{\{\mathcal{I}_i\}_{i=1}^k}S\lp S^TP^T_{\{\mathcal{I}_i\}_{i=1}^k}G^{-1}(s)P_{\{\mathcal{I}_i\}_{i=1}^k}S+f(s)\Lambda_k\rp^{-1}S^TP^T_{\{\mathcal{I}_i\}_{i=1}^k}\\
        &=\;P_{\{\mathcal{I}_i\}_{i=1}^k}\lp (S^T)^{-1}\lp S^TP^T_{\{\mathcal{I}_i\}_{i=1}^k}G^{-1}(s)P_{\{\mathcal{I}_i\}_{i=1}^k}S+f(s)\Lambda_k\rp S^{-1}\rp^{-1}P^T_{\{\mathcal{I}_i\}_{i=1}^k}\\
        &=\;P_{\{\mathcal{I}_i\}_{i=1}^k}\lp P^T_{\{\mathcal{I}_i\}_{i=1}^k}G^{-1}(s)P_{\{\mathcal{I}_i\}_{i=1}^k}+f(s)(S^T)^{-1}\Lambda_kS^{-1}\rp^{-1}P^T_{\{\mathcal{I}_i\}_{i=1}^k}\\
        &=\;P_{\{\mathcal{I}_i\}_{i=1}^k}\lp \hat{G}^{-1}_k(s)+f(s)L_k\rp^{-1}P^T_{\{\mathcal{I}_i\}_{i=1}^k}\\
        &=\; P_{\{\mathcal{I}_i\}_{i=1}^k}\lp I+\hat{G}(s)L_kf(s)\rp^{-1}\hat{G}(s)P^T_{\{\mathcal{I}_i\}_{i=1}^k}\,.
    \end{align*}
        
\end{proof}
\newpage
\section{Proof of Theorem \ref{thm_err}}\label{app_pf_thm_err}
Before showing the proof of Theorem \ref{thm_err}, we state a few lemmas that are used. 
\subsection{Auxiliary Lemmas}
The proofs of the lemmas shown in this section are presented in Appendix~\ref{app_ssec_pf_ax_lem}.

Firstly, we need the following lemma concerning the stability of the original network $T_{yu}(s)$ and our approximation model $T_k(s),\hat{T}_k(s)$. 
\begin{lemma}
    \label{lem_Tk_hinf_bd}
    Suppose all $g_i(s), f(s)$ satisfies Assumption \ref{assump_net_model}, then for any $V_k$ with $V_k^TV_k=I$ and any $\Lambda_k\succeq 0$, the corresponding $$T(s)=V_k(V_k^T\dg\{g_i^{-1}(s)\}V_k+f(s)\Lambda_k)V_k^T$$
    has
    $$
      \|T(s)\|_{\mathcal{H}_\infty}\leq \gamma\,.
    $$
\end{lemma}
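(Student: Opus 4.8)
The plan is to exploit the passivity structure of the node dynamics and coupling directly, reducing everything to a lower bound on the smallest singular value of the inner matrix. First I would note that, to match the definition of $T_k(s)$ in Section \ref{ssec_low_rank_approx}, the inner factor must be inverted, i.e. $T(s)=V_k\lp V_k^T\dg\{g_i^{-1}(s)\}V_k+f(s)\Lambda_k\rp^{-1}V_k^T$. Writing $M(s):=V_k^T\dg\{g_i^{-1}(s)\}V_k+f(s)\Lambda_k$ and using that $V_k$ has orthonormal columns (so it acts as an isometry and $V_k^T=V_k^*$ since it is real), one has $\lV T(s)\rV=\lV M(s)^{-1}\rV$. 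Hence it suffices to prove $\lV M(s)^{-1}\rV\leq\gamma$, equivalently $\sigma_{\min}(M(s))\geq 1/\gamma$, for every $s$ with $Re(s)>0$.

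The key algebraic observation is that output strict passivity of each $g_i$ yields a uniform lower bound on the real part of $g_i^{-1}$. From $g_i^{-1}=\bar{g}_i/|g_i|^2$ we get $Re(g_i^{-1}(s))=Re(g_i(s))/|g_i(s)|^2$, and the assumed inequality $Re(g_i(s))\geq\frac{1}{\gamma}|g_i(s)|^2$ gives $Re(g_i^{-1}(s))\geq 1/\gamma$ for all $Re(s)>0$. Thus the Hermitian part of $\dg\{g_i^{-1}(s)\}$ dominates $\frac{1}{\gamma}I_n$ on the open right half-plane.

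Next I would lower-bound $\sigma_{\min}(M(s))$ by testing against an arbitrary unit vector. For $x\in\compl^k$ with $\lV x\rV=1$, set $y:=V_kx$, so $\lV y\rV=1$. Then $x^*M(s)x=y^*\dg\{g_i^{-1}(s)\}y+f(s)\,(x^*\Lambda_kx)$. Taking real parts, the first term is at least $\frac{1}{\gamma}\lV y\rV^2=\frac{1}{\gamma}$ by the bound above, while the second term equals $Re(f(s))\,(x^*\Lambda_kx)\geq 0$, since $x^*\Lambda_kx\geq 0$ ($\Lambda_k\succeq 0$) and $Re(f(s))>0$ by positive realness of $f$. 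Hence $Re(x^*M(s)x)\geq 1/\gamma$, and because $\lV M(s)x\rV\geq |x^*M(s)x|\geq Re(x^*M(s)x)$, we obtain $\sigma_{\min}(M(s))\geq 1/\gamma$ and therefore $\lV T(s)\rV\leq\gamma$ pointwise on $\{Re(s)>0\}$.

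Finally, to pass from this pointwise estimate to the $\mathcal{H}_\infty$ bound, I would verify that $T(s)$ is stable: output strict passivity forces each $g_i$ to be analytic and nonvanishing in the open right half-plane (so $g_i^{-1}$ is analytic), and positive realness makes $f$ analytic there as well, while the strict bound $\sigma_{\min}(M(s))\geq 1/\gamma>0$ shows $M(s)$ is invertible throughout, so $M(s)^{-1}$ and hence $T(s)$ are analytic. Since $\lV T\rV_{\mathcal{H}_\infty}=\sup_{Re(s)>0}\lV T(s)\rV$ for a stable $T$, the estimate established above yields $\lV T\rV_{\mathcal{H}_\infty}\leq\gamma$. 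I expect the main obstacle to be exactly this stability/analyticity bookkeeping---confirming the passivity assumptions rule out right half-plane poles---rather than the norm estimate, which follows cleanly once $Re(g_i^{-1})\geq 1/\gamma$ and $Re(f)\geq 0$ are in hand.
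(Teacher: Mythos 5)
Your proof is correct, and it follows the same overall strategy as the paper's: use output strict passivity to lower-bound the Hermitian part of the inner matrix $H_k(s)=V_k^T\dg\{g_i^{-1}(s)\}V_k+f(s)\Lambda_k$ on the open right half-plane, conclude $\sigma_{\min}(H_k(s))\geq 1/\gamma$, and combine with analyticity (OSP $\Rightarrow$ positive real $\Rightarrow$ no right-half-plane poles or zeros of $g_i$) to get the $\mathcal{H}_\infty$ bound. You were also right to flag the missing inverse in the lemma statement as a typo. Where you diverge is in the two technical steps. First, you obtain $\mathrm{Re}(g_i^{-1}(s))\geq 1/\gamma$ by a direct scalar computation, whereas the paper works at the matrix level, multiplying the OSP inequality $G^*+G\succeq \frac{2}{\gamma}G^*G$ by $(G^{-1})^*$ and $G^{-1}$ and then by $V_k^T$, $V_k$ — these are equivalent. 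Second, and more substantively, to pass from the Hermitian-part bound to the singular-value bound you use the elementary numerical-range/Cauchy--Schwarz inequality $\|H_k(s)x\|\geq |x^*H_k(s)x|\geq \mathrm{Re}(x^*H_k(s)x)\geq 1/\gamma$, while the paper routes through a dissipativity-style LMI argument: it rewrites the Hermitian-part bound as a quadratic matrix inequality in $\begin{bmatrix}I\\ H_k(s)\end{bmatrix}$, adds a Schur-complement-certified negative semidefinite perturbation, and deduces $\gamma H_k^*(s)H_k(s)\succeq \frac{1}{\gamma}I$. Your route is shorter and avoids the block-matrix bookkeeping (which in the paper contains a stray $\epsilon$ and some confusing inequality directions); the paper's LMI formulation has the advantage of being the standard supply-rate manipulation familiar from dissipativity theory, but it buys nothing extra here since both arguments prove exactly the same pointwise bound $\|H_k^{-1}(s)\|\leq\gamma$ for all $\mathrm{Re}(s)>0$.
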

This Lemma shows the stability of $T_{yu}(s),T_k(s),\hat{T}_k(s)$ by choosing different $V_k,\Lambda_k$.

The following lemma concerns controlling the approximation error between $T_k(s)$ and $\hat{T}_k(s)$.
\begin{lemma}\label{lem_T_k_approx_err}
    Suppose all $g_i(s), f(s)$ satisfies Assumption \ref{assump_net_model}. Given two matrices $V_k,\hat{V_k}\in\mathbb{R}^{n\times k}$ with $V_k^TV_k=\hat{V}_k^T\hat{V}_k=I$ and some $\Lambda_k\succeq 0$. Define
    \begin{align*}
        T_k(s)&=\;V_k(V_k^T\dg\{g_i^{-1}(s)\}V_k+f(s)\Lambda_k)^{-1}V_k^T\,,\\
        \hat{T}_k(s)&=\;\hat{V}_k(\hat{V}_k^T\dg\{g_i^{-1}(s)\}\hat{V}_k+f(s)\Lambda_k)^{-1}\hat{V}_k^T\,.
    \end{align*}
    Given any $\eta>0$, we have
    $$\sup_{s\in(-j\eta,+j\eta)}\|T_k(s)-\hat{T}_k(s)\|\leq 2(\gamma+\gamma^2M(\eta)) \|V_k-\hat{V}_k\|_F\,,$$
    where $
        M(\eta)=\sup_{s\in(-j\eta,+j\eta)}\max_i|g_i^{-1}(s)|
    $.
\end{lemma}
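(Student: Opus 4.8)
The plan is to bound $\|T_k(s)-\hat T_k(s)\|$ pointwise for each fixed $s$ on the imaginary segment $(-j\eta,+j\eta)$; since every estimate I derive is uniform in $s$ over this segment, taking the supremum at the end is immediate. Write $A=\dg\{g_i^{-1}(s)\}$, $M_V=V_k^TAV_k+f(s)\Lambda_k$ and $M_{\hat V}=\hat V_k^TA\hat V_k+f(s)\Lambda_k$, so that $T_k=V_kM_V^{-1}V_k^T$ and $\hat T_k=\hat V_kM_{\hat V}^{-1}\hat V_k^T$. The first step is the telescoping decomposition
\[
T_k-\hat T_k = V_kM_V^{-1}(V_k-\hat V_k)^T + (V_k-\hat V_k)M_V^{-1}\hat V_k^T + \hat V_k\lp M_V^{-1}-M_{\hat V}^{-1}\rp\hat V_k^T,
\]
which isolates the three places where $V_k$ is replaced by $\hat V_k$ (the left factor, the right factor, and the inner inverse). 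I would then bound the three terms separately by the triangle inequality.

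The observation that makes the first two terms tractable is that $V_k$ and $\hat V_k$ have orthonormal columns, so left-multiplication by $V_k$ (or $\hat V_k$) and right-multiplication by $V_k^T$ (or $\hat V_k^T$) preserve the spectral norm. In particular $\|M_V^{-1}\|=\|V_kM_V^{-1}V_k^T\|=\|T_k(s)\|$, and by Lemma \ref{lem_Tk_hinf_bd} (applied with $V_k,\Lambda_k$) this is at most $\gamma$ on the imaginary axis; the analogous bound $\|M_{\hat V}^{-1}\|\le\gamma$ holds for $\hat V_k$. Using these, the first term is $\le\|M_V^{-1}\|\,\|V_k-\hat V_k\|\le\gamma\|V_k-\hat V_k\|$ and the second is $\le\|V_k-\hat V_k\|\,\|M_V^{-1}\|\le\gamma\|V_k-\hat V_k\|$, each contributing a factor $\gamma$.

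For the third term I would use the resolvent identity $M_V^{-1}-M_{\hat V}^{-1}=M_V^{-1}(M_{\hat V}-M_V)M_{\hat V}^{-1}$, noting that the $f(s)\Lambda_k$ parts cancel so that $M_{\hat V}-M_V=\hat V_k^TA\hat V_k-V_k^TAV_k$. A second telescoping gives $\hat V_k^TA\hat V_k-V_k^TAV_k=(\hat V_k-V_k)^TA\hat V_k+V_k^TA(\hat V_k-V_k)$, and since $\|A\|=\max_i|g_i^{-1}(s)|\le M(\eta)$ on the segment (Assumption \ref{assump_net_model}, part 2) and $\|V_k\|=\|\hat V_k\|=1$, this middle factor is bounded by $2M(\eta)\|V_k-\hat V_k\|$. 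Combining with $\|M_V^{-1}\|,\|M_{\hat V}^{-1}\|\le\gamma$, the third term is $\le 2\gamma^2M(\eta)\|V_k-\hat V_k\|$. Summing the three contributions yields $2(\gamma+\gamma^2M(\eta))\|V_k-\hat V_k\|$, and finally replacing the spectral norm by the Frobenius norm via $\|\cdot\|\le\|\cdot\|_F$ gives the claimed inequality.

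I do not anticipate a genuine obstacle; the only subtle point—and the crux of the argument—is recognizing that orthonormality of the columns lets one identify $\|M_V^{-1}\|$ with $\|T_k(s)\|$ and thereby invoke the a priori $\mathcal{H}_\infty$ bound of Lemma \ref{lem_Tk_hinf_bd}. Without this step the resolvent estimate for the inner term would have no handle on $\|M_V^{-1}\|$ and the resulting bound would be vacuous.
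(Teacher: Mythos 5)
Your proof is correct and follows essentially the same route as the paper's: the same three-term telescoping decomposition (up to a harmless rearrangement of which factor retains $V_k$ versus $\hat{V}_k$), the same resolvent identity with a second telescoping of $\hat{V}_k^T\dg\{g_i^{-1}(s)\}\hat{V}_k-V_k^T\dg\{g_i^{-1}(s)\}V_k$, and the same key step of invoking the intermediate bound $\|H_k^{-1}(s)\|\leq\gamma$ from the proof of Lemma \ref{lem_Tk_hinf_bd} for both $V_k$ and $\hat{V}_k$. No gaps.
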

That is, since $\hat{T}_k(s)$ is obtained by replace $V_k$ in $T_k(s)$ by $\hat{T}_k(s)$,  the error can be controlled by the difference $\|V_k-\hat{V}_k\|_F$ between $V_k$ and $\hat{V}_k$. Recall that Theorem \ref{thm_Tk} provides a bound on $\|T_{yu}(s)-T_k(s)\|$, combing it with Lemma \ref{lem_T_k_approx_err} allows us to control the error $\|T_{yu}(s)-\hat{T}_k(s)\|$, as stated in the following lemma:
\begin{lemma}\label{lem_err_prob_bd}
    Consider the network model $(G(s),L,f(s))$ with $L$ sampled from a weighted stochastic block model $(\{\mathcal{I}_i\}_{i=1}^k,P,W)$. If Assumption \ref{assump_net_model} holds,
    Then given any $\eta>0$,
    we have $\forall\epsilon>0$,
    \begin{align*}
        &\;\prob\lp \sup_{s\in(-j\eta,+j\eta)}\|T_{yu}(s)-\hat{T}_k(s)\|\geq \epsilon\rp\\
        \leq &\;2\prob\lp \lambda_{k+1}(L)\leq \frac{1}{F_l(\eta)}\lp \frac{2}{\epsilon}(\gamma M(\eta) +1)^2+M(\eta)+\gamma M^2(\eta)\rp\rp\\
        &\; \qquad\qquad\qquad\qquad\qquad+\prob\lp \|V_k-\hat{V}_k\|\geq \frac{\epsilon}{4(\gamma+\gamma^2M(\eta))}\rp\,,
    \end{align*}
    where $
        \sup_{s\in(-j\eta,+j\eta)}\max_i|g_i^{-1}(s)|:=M(\eta)
    $ and $F_l(\eta):=\inf_{s\in(-j\eta,+j\eta)}|f(s)|$.
\end{lemma}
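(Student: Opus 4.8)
The plan is to insert the intermediate low-rank object $T_k(s)$ and split by the triangle inequality: for every $s$ in the strip, $\lV T_{yu}(s)-\hat{T}_k(s)\rV\leq \lV T_{yu}(s)-T_k(s)\rV + \lV T_k(s)-\hat{T}_k(s)\rV$, so the same inequality holds after taking $\sup_{s\in(-j\eta,+j\eta)}$. Consequently the event $\{\sup_s\lV T_{yu}-\hat{T}_k\rV\geq\epsilon\}$ is contained in $\{\sup_s\lV T_{yu}-T_k\rV\geq\epsilon/2\}\cup\{\sup_s\lV T_k-\hat{T}_k\rV\geq\epsilon/2\}$, and a single union bound reduces the claim to estimating these two probabilities separately, with the split $\epsilon/2+\epsilon/2$ dictating every constant that appears.

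For the first term I would invoke Theorem \ref{thm_Tk} pointwise in $s_0$. The two quantities $M_1=\lV T_k(s_0)\rV$ and $M_2=\max_i|g_i^{-1}(s_0)|$ there are controlled \emph{uniformly} over the strip by deterministic bounds: Lemma \ref{lem_Tk_hinf_bd}, applied with the orthonormal $V_k$ and $\Lambda_k\succeq 0$, gives $M_1\leq\gamma$; Assumption \ref{assump_net_model} (part 2) gives $M_2\leq M(\eta)$; and $|f(s_0)|\geq F_l(\eta)$ by definition. Since the right-hand side of \eqref{eq_T_norm_bd} is increasing in $M_1,M_2$ and decreasing in $|f(s_0)|$, substituting these uniform bounds turns the pointwise estimate into $\sup_s\lV T_{yu}-T_k\rV\leq \frac{(\gamma M(\eta)+1)^2}{F_l(\eta)\lambda_{k+1}(L)-M(\eta)-\gamma M(\eta)^2}$ whenever the hypothesis of Theorem \ref{thm_Tk} holds. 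Requiring this to be $\leq\epsilon/2$ rearranges to exactly the stated spectral-gap threshold $\lambda_{k+1}(L)\geq \frac{1}{F_l(\eta)}\left(\frac{2}{\epsilon}(\gamma M(\eta)+1)^2+M(\eta)+\gamma M(\eta)^2\right)$, so $\{\sup_s\lV T_{yu}-T_k\rV\geq\epsilon/2\}$ is charged to the complementary tail event on $\lambda_{k+1}(L)$. The factor $2$ in the statement is a conservative accounting: one copy of this tail covers the event that the hypothesis $|f|\lambda_{k+1}>M_2+M_1M_2^2$ of Theorem \ref{thm_Tk} is violated (so no bound is available), and a second copy covers the event that, although the hypothesis holds, the resulting estimate still exceeds $\epsilon/2$.

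For the second term I would apply Lemma \ref{lem_T_k_approx_err} directly, which gives $\sup_s\lV T_k(s)-\hat{T}_k(s)\rV\leq 2(\gamma+\gamma^2 M(\eta))\,\|V_k-\hat{V}_k\|_F$. Hence $\{\sup_s\lV T_k-\hat{T}_k\rV\geq\epsilon/2\}$ is contained in $\{\|V_k-\hat{V}_k\|_F\geq \frac{\epsilon}{4(\gamma+\gamma^2 M(\eta))}\}$, which is precisely the second probability in the claim (up to the $\|\cdot\|$ versus $\|\cdot\|_F$ distinction, harmless here since $\|\cdot\|\leq\|\cdot\|_F$). Assembling the two inclusions through the union bound yields the asserted inequality.

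The only genuinely delicate point is the uniformization over the continuum $s\in(-j\eta,+j\eta)$: Theorem \ref{thm_Tk} is pointwise and Lemma \ref{lem_T_k_approx_err} already carries a supremum, so I must confirm that the controlling constants $M_1,M_2,|f|$ are bounded \emph{simultaneously} for all frequencies in the strip before taking the supremum, which is exactly what Lemma \ref{lem_Tk_hinf_bd}, Assumption \ref{assump_net_model} (part 2) and the definition of $F_l(\eta)$ supply. I would also keep careful track of what is deterministic versus random: $M_1\leq\gamma$ costs no probability, and all the randomness enters only through $\lambda_{k+1}(L)$ and through $\hat{V}_k$ (hence $\|V_k-\hat{V}_k\|_F$); isolating these two random quantities is what keeps the final bound to exactly the two tail probabilities. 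The heavy probabilistic work of showing those two tails vanish as $n\to\infty$ (via $\|L-L_{\text{blk}}\|\sim\mathcal{O}_p(\sqrt{n\log n})$, Weyl's inequality for $\lambda_{k+1}$, and Davis--Kahan for $\|V_k-\hat{V}_k\|_F$) is not part of this lemma and is deferred to the proof of Theorem \ref{thm_err}.
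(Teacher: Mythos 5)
Your proposal follows the paper's own proof essentially step for step: the same triangle-inequality and union-bound split at $\epsilon/2$, the same uniformization of Theorem \ref{thm_Tk} over the strip via $M_1\leq\gamma$ (from Lemma \ref{lem_Tk_hinf_bd}), $M_2\leq M(\eta)$ and $|f(s_0)|\geq F_l(\eta)$, the same factor-of-two accounting for the first term (one tail for the regime where the hypothesis of Theorem \ref{thm_Tk} fails, one for the regime where it holds but the resulting estimate still exceeds $\epsilon/2$, both dominated by the tail at the larger threshold), and Lemma \ref{lem_T_k_approx_err} for the $T_k$ versus $\hat{T}_k$ term.

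One detail is off, though it is easily repaired. The second probability in the lemma is stated for the spectral norm $\|V_k-\hat{V}_k\|$, whereas Lemma \ref{lem_T_k_approx_err} as stated controls $\sup_s\|T_k(s)-\hat{T}_k(s)\|$ by $2(\gamma+\gamma^2M(\eta))\|V_k-\hat{V}_k\|_F$. From that Frobenius-norm bound, the event $\{\sup_s\|T_k(s)-\hat{T}_k(s)\|\geq\epsilon/2\}$ is contained in $\{\|V_k-\hat{V}_k\|_F\geq\frac{\epsilon}{4(\gamma+\gamma^2M(\eta))}\}$, not in the spectral-norm event of the statement; your remark that the distinction is harmless ``since $\|\cdot\|\leq\|\cdot\|_F$'' has the inequality pointing the wrong way, because a large Frobenius norm does not force a large spectral norm. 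Taken literally, your argument therefore proves the bound with $\prob\lp\|V_k-\hat{V}_k\|_F\geq \frac{\epsilon}{4(\gamma+\gamma^2M(\eta))}\rp$ on the right-hand side, which is a weaker inequality than the one claimed. The fix is what the paper actually does: the proof of Lemma \ref{lem_T_k_approx_err} establishes the spectral-norm estimate $\sup_s\|T_k(s)-\hat{T}_k(s)\|\leq 2(\gamma+\gamma^2M(\eta))\|V_k-\hat{V}_k\|$ as its penultimate step (the relaxation to the Frobenius norm is only the final line), and invoking that intermediate inequality yields exactly the stated tail event. Alternatively, restating the lemma with $\|\cdot\|_F$ would also serve the downstream argument, since Lemma \ref{lem_fro_bd_sin} controls the Frobenius norm anyway.
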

That is, we need to lower bound $\lambda_{k+1}(L)$ and upper bound $\|V_k-\hat{V}_k\|$ for controlling the error. All of these are possible by studying the Laplacian matrix constructed from the expected adjacency matrix:

For a weighted stochastic block model $(\{\mathcal{I}_i\}_{i=1}^k,Q,W)$, we denote the expected value of adjacency matrix $A$ as
\be
    A_{\text{blk}}=P_{\{\mathcal{I}_i\}_{i=1}^k}BP_{\{\mathcal{I}_i\}_{i=1}^k}^T,\ B=Q\odot W\,,
\ee
and define
\be
    L_\text{blk}=\dg\{A_\text{blk}\one_n\}-A_\text{blk}\,,
\ee
and
\be
    V_k^\text{blk}=\bmt \frac{\one}{\sqrt{n}}& v_2(L_\text{blk}) &\cdots & v_k(L_\text{blk})\emt\,.
\ee
Firstly, if the spectral clustering algorithm returns the true block assignment, then it is sufficient to control the difference between $V_k$ and $V_k^\text{blk}$ for upper bounding $\|V_k-\hat{V}_k\|$:
\begin{lemma}\label{lem_fro_bd_sin}
    Let $V_k^\text{blk}=\bmt \frac{\one}{\sqrt{n}}& v_2(L_\text{blk}) &\cdots & v_k(L_\text{blk})\emt$. Consider an $L$ sampled from a weighted stochastic block model $(\{\mathcal{I}_i\}_{i=1}^k,Q,W)$, if the spectral clustering algorithm in Algorithm \ref{algo_approx_model} returns the true block assignments $\{\mathcal{I}_i\}_{i=1}^k$, then optimizing \eqref{eq_V_fit_opt_fix_part} yields a $\hat{V}$ such that
    \be
        \|\hat{V}_k-V_k\|\leq \|\sin \Theta(V_k,V_k^{\text{blk}})\|_F\,.
    \ee
\end{lemma}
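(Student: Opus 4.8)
The plan is to exploit two facts: that the block embedding $V_k^{\text{blk}}$ is itself a feasible point of the refinement problem \eqref{eq_V_fit_opt_fix_part} once the true partition is recovered, and that $\hat{V}_k$ is by construction the Frobenius-closest feasible point to $V_k$. First I would record the prerequisite, already invoked in the proof sketch of Theorem \ref{thm_err}, that $V_k^{\text{blk}}$ is $k$-block-ideal with respect to the true partition $\{\mathcal{I}_i\}_{i=1}^k$. This holds because $A_{\text{blk}}=P_{\{\mathcal{I}_i\}}BP_{\{\mathcal{I}_i\}}^T$ is block-constant, so $\mathrm{range}(P_{\{\mathcal{I}_i\}})$ is invariant under $L_{\text{blk}}$ with the $k\times k$ reduced Laplacian $\dg\{B m\}-B\dg\{m\}$ ($m$ the vector of block sizes) acting on it, while on the complementary zero-block-sum space $L_{\text{blk}}$ reduces to the block-constant degree diagonal with eigenvalues $[Bm]_{(i)}$; under the intra/inter separation \eqref{eq_assump_con_diff} and the balance \eqref{eq_assump_blk_bal} these degree eigenvalues dominate the reduced-Laplacian ones, so the bottom-$k$ eigenspace of $L_{\text{blk}}$ is exactly the block-indicator span.

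Writing $V_k^{\text{blk}}=P_{\{\mathcal{I}_i\}}S^{\text{blk}}$, I would check directly that $S^{\text{blk}}$ meets both constraints of \eqref{eq_V_fit_opt_fix_part}: the shared first column $\one/\sqrt n = P_{\{\mathcal{I}_i\}}\one_k/\sqrt n$ forces $S^{\text{blk}}e_1 = \one_k/\sqrt n$, and orthonormality of $V_k^{\text{blk}}$ gives $(S^{\text{blk}})^T\dg\{|\mathcal{I}_i|\}S^{\text{blk}} = (V_k^{\text{blk}})^T V_k^{\text{blk}} = I_k$. Hence $S^{\text{blk}}$ is feasible, and since the clustering step returns the true partition, this is exactly the partition over which \eqref{eq_V_fit_opt_fix_part} is solved. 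The key additional observation is that the feasible set is closed under right multiplication by any orthogonal $O$ fixing $e_1$ (such $O$ preserves both constraints), so every rotated copy $V_k^{\text{blk}}O$ is feasible too. Optimality of $\hat{V}_k$ then yields $\|\hat{V}_k - V_k\|_F \le \min_{O:\,Oe_1=e_1}\|V_k^{\text{blk}}O - V_k\|_F$.

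It remains to bound this Procrustes distance by $\|\sin\Theta(V_k,V_k^{\text{blk}})\|_F$. Since $V_k$ and $V_k^{\text{blk}}$ share the first column $\one/\sqrt n$ exactly, $V_k^T V_k^{\text{blk}}$ is block-diagonal with a leading $1$ and remaining block $\tilde V_k^T\tilde V_k^{\text{blk}}$, so one principal angle vanishes and the other cosines are the singular values of $\tilde V_k^T\tilde V_k^{\text{blk}}$. Evaluating the Procrustes minimum gives $\min_O\|V_k^{\text{blk}}O - V_k\|_F^2 = 2\sum_i(1-\cos\theta_i)$ for the principal angles $\theta_i$, and the elementary inequality $1-\cos\theta_i \le \sin^2\theta_i$ turns this into a multiple of $\|\sin\Theta\|_F^2$. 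Passing from Frobenius to spectral norm via $\|\hat V_k - V_k\|\le \|\hat V_k - V_k\|_F$ then delivers the stated bound on the left-hand side.

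The main obstacle is the rotational alignment in the last step: the sampled eigenvectors $V_k$ and the block eigenvectors $V_k^{\text{blk}}$ agree only up to an orthogonal rotation inside the bottom eigenspace, and the naive comparison $\|V_k^{\text{blk}} - V_k\|_F$ does \emph{not} control $\|\sin\Theta\|_F$ — it is precisely the freedom in $S$ that absorbs this ambiguity and makes the Procrustes-aligned bound go through. The shared first column keeps the reduction to principal angles clean; the only quantitative subtlety is the constant relating $\sum_i(1-\cos\theta_i)$ to $\|\sin\Theta\|_F^2$, which inflates the right-hand side by at most a bounded factor and is immaterial for the asymptotic use of the lemma in Theorem \ref{thm_err}. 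A secondary item to verify carefully is the eigenvalue separation of the first paragraph, which is exactly where the separation and balance conditions of Assumption \ref{assump_net_model} enter.
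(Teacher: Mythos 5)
Your argument is essentially the paper's own proof: once the true partition is recovered, the rotated block embedding $V_k^{\text{blk}}O$ (with $Oe_1=e_1$) is feasible for \eqref{eq_V_fit_opt_fix_part}, optimality of $\hat V_k$ reduces the claim to a Procrustes-aligned distance, and the shared first column plus $1-\cos\theta_i\le\sin^2\theta_i$ converts that distance into $\|\sin\Theta(V_k,V_k^{\text{blk}})\|_F$, followed by spectral-norm $\le$ Frobenius-norm. Your caveat about the constant is warranted and in fact flags a slip in the paper's own algebra: the Procrustes distance squared is $2\sum_i(1-\cos\theta_i)$ (the paper drops this factor of $2$), so this route honestly yields $\sqrt{2}\,\|\sin\Theta(V_k,V_k^{\text{blk}})\|_F$ rather than $\|\sin\Theta(V_k,V_k^{\text{blk}})\|_F$ — a constant inflation that, as you note, is immaterial for the use of the lemma in Theorem \ref{thm_err}.
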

The term $\|\sin \Theta(V_k,V_k^{\text{blk}})\|_F$ should be small given that $L$ and $L_{blk}$ are sufficiently close to each other with high probability(to be formalized later). Moreover, $\lambda_{k+1}(L)$ and $\lambda_{k+1}(L_{\text{blk}})$ should be close for the same reason. We discuss the spectrum of $L_\text{blk}$ in detail in Appendix~\ref{app_spec_L_blk}. The following Lemma is the direct consequence of Proposition \ref{prop_eigenspace_L_blk} in Appendix~\ref{app_spec_L_blk}:
\begin{lemma}\label{lem_spec_L_blk}
    Consider a weighted stochastic block model $(\{\mathcal{I}_i\}_{i=1}^k,Q,W)$ satisfying Assumption \ref{assump_net_model}. Let $n_{\min}=\min_{1\leq i\leq k}|\mathcal{I}_i|$, and $b_{\min}:=\min\{[B_k\one_k]_i:\ i=1,\cdots,k\}$. We have
    \begin{enumerate}
        \item $L_\text{blk}$ is $k$-block-ideal;
        \item $\lambda_{k+1}(L_{\text{blk}})\geq b_{\min}n_{\min}$;
        \item $\lambda_{k+1}(L_{\text{blk}})-\lambda_{k}(L_{\text{blk}})\geq \Delta n_{\min}$.
    \end{enumerate}
\end{lemma}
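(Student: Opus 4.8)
The plan is to reduce all three claims to an explicit spectral decomposition of $L_{\text{blk}}$ built from the block structure of its expected adjacency matrix. Write $P:=P_{\{\mathcal{I}_i\}_{i=1}^k}$ and $n_a:=|\mathcal{I}_i|$. The first observation is that $A_{\text{blk}}=PBP^T$, so every node in block $a$ has the same degree $d_a:=[A_{\text{blk}}\one_n]_i=\sum_{b} n_b B_{ab}$ for $i\in\mathcal{I}_a$, and $\dg\{A_{\text{blk}}\one_n\}$ is constant $d_a$ on each block. I would then show that both $\mathrm{range}(P)$ and its orthogonal complement are invariant under $L_{\text{blk}}$: for $x=Pc$ the vector $A_{\text{blk}}x=PB P^TPc$ is again block-constant, while for any $y\perp\mathrm{range}(P)$ (equivalently, $y$ has zero sum on each block) one has $P^Ty=0$ and hence $A_{\text{blk}}y=PBP^Ty=0$, so $L_{\text{blk}}y=\dg\{A_{\text{blk}}\one_n\}y$. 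This splits $L_{\text{blk}}$ into a direct sum of its actions on the two subspaces.

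Next I would compute the two spectra separately. On the $(n-k)$-dimensional complement, $L_{\text{blk}}$ acts as the diagonal degree matrix, which on the zero-sum part of block $a$ is simply $d_a I$; thus the eigenvalues contributed by the complement are exactly $\{d_a$ with multiplicity $n_a-1\}_{a=1}^k$. On $\mathrm{range}(P)$, passing to the orthonormal basis $U:=P\,\dg\{n_a^{-1/2}\}_{a=1}^k$ gives the symmetric $k\times k$ representation
\begin{align*}
    \tilde{M}:=U^TL_{\text{blk}}U=\dg\{d_a\}_{a=1}^k-\dg\{\sqrt{n_a}\}_{a=1}^k\,B\,\dg\{\sqrt{n_b}\}_{b=1}^k\,,
\end{align*}
whose diagonal entries are $\tilde{M}_{aa}=\sum_{b\neq a}n_bB_{ab}$ and off-diagonal entries $\tilde{M}_{ab}=-\sqrt{n_an_b}\,B_{ab}$. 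The eigenvalues of $\tilde{M}$ are the remaining $k$ eigenvalues of $L_{\text{blk}}$; one of them is $0$ since $\one_n=P\one_k\in\ker L_{\text{blk}}$, consistent with $v_1=\one_n/\sqrt{n}$.

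The crux is to order these two groups of eigenvalues. Using Gershgorin on the symmetric $\tilde{M}$ together with $\sqrt{n_an_b}\le n_{\max}$ and $n_b\le n_{\max}$, I would bound $\lambda_{\max}(\tilde{M})\le\max_a\big(\tilde{M}_{aa}+\sum_{b\neq a}|\tilde{M}_{ab}|\big)\le 2n_{\max}\max_a\sum_{b\neq a}B_{ab}$, while $d_{\min}:=\min_a d_a\ge n_{\min}\min_a B_{aa}$ and likewise $d_{\min}\ge n_{\min}b_{\min}$. The balance bound \eqref{eq_assump_blk_bal} and the connectivity gap \eqref{eq_assump_con_diff} then give $\lambda_{\max}(\tilde{M})\le 2\rho\,n_{\min}\max_a\sum_{b\neq a}B_{ab}<n_{\min}\min_a B_{aa}\le d_{\min}$, so the $k$ eigenvalues from $\tilde{M}$ are strictly below every eigenvalue from the complement. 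This simultaneously yields all three claims: the bottom-$k$ eigenspace equals $\mathrm{range}(P)$, hence $V_k=PS$ with $S$ invertible (both $V_k$ and $P$ have rank $k$), proving $L_{\text{blk}}$ is $k$-block-ideal (claim 1); $\lambda_{k+1}(L_{\text{blk}})=d_{\min}\ge b_{\min}n_{\min}$ (claim 2); and $\lambda_{k+1}(L_{\text{blk}})-\lambda_k(L_{\text{blk}})=d_{\min}-\lambda_{\max}(\tilde{M})\ge n_{\min}(\min_a B_{aa}-2\rho\max_a\sum_{b\neq a}B_{ab})\ge\Delta n_{\min}$ (claim 3).

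The main obstacle I anticipate is purely bookkeeping in the final comparison: the Gershgorin radius mixes the weights $\sqrt{n_an_b}$ and $n_b$, and these must be dominated by $n_{\max}$ and then converted via the balance factor $\rho$ so that the intra/inter-block connectivity quantities line up exactly with the left-hand side of \eqref{eq_assump_con_diff}; getting the constant $2\rho$ to match (rather than, say, a looser factor) is what forces the particular bounds used above. A secondary point requiring care is justifying that the bottom-$k$ eigenspace is genuinely $\mathrm{range}(P)$ even in the presence of repeated eigenvalues inside $\tilde{M}$ or on the complement, which is exactly what the strict separation $\lambda_{\max}(\tilde{M})<d_{\min}$ secures.
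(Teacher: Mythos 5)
Your proof is correct and follows essentially the same route as the paper's (Proposition \ref{prop_eigenspace_L_blk}): split the spectrum of $L_{\text{blk}}$ into the block-constant subspace, governed by a reduced $k\times k$ matrix, and the per-block zero-sum subspace, on which $L_{\text{blk}}$ acts as the degree $d_a$ with multiplicity $n_a-1$, then separate the two groups of eigenvalues via Gershgorin and Assumption \ref{assump_net_model}. The only cosmetic difference is that you compress onto the orthonormal basis $U$ to obtain a symmetric reduced matrix $\tilde{M}$, whereas the paper uses the intertwining relation $L_{\text{blk}}P = P\tilde{L}_k$ with a non-symmetric reduced Laplacian $\tilde{L}_k$ that is similar to your $\tilde{M}$.
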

Now we are ready to proof our main theorem.
\subsection{Proof of Theorem \ref{thm_err}}

\begin{proof}[Proof of Theorem \ref{thm_err}]
    Define
    $\tilde{B}:=P\odot W \odot W$, and $\tilde{b}_{\max}=\max_i\sum_{j}B_{ij},\tilde{b}_{\min}=\min_j\sum_{j}B_{ij}$. We also define $W_{\max}=:\max_{ij}|W_{ij}|$.
    A direct application of Proposition 3 in~\citet{min2022} shows that for any $c>0$, if 
    \be kn_{\min}\tilde{b}_{\max}\geq 16(c+1)\log n\,,\label{eq_pf_thm_err_cond1}\ee
    then for any $4n^{-c}\leq \frac{\delta}{6}<1$, we have
    \be
    \prob\lp \|L-L_{\text{blk}}\|\geq 8\sqrt{kn_{\max}\tilde{b}_{\max}\log (24n/\delta)}\rp\leq \frac{\delta}{6}
    \ee
    If 
    \be
        b_{\min}n_{\min}-8\sqrt{kn_{\max}\tilde{b}_{\max}\log (24n/\delta)}\geq \frac{1}{F_l(\eta)}\lp \frac{2}{\epsilon}(\gamma M(\eta) +1)^2+M(\eta)+\gamma M^2(\eta)\rp\label{eq_pf_thm_err_cond2}\,,
    \ee
    then 
    \begin{align*}
        &\;\lambda_{k+1}(L)\leq \frac{1}{F_l(\eta)}\lp \frac{2}{\epsilon}(\gamma M(\eta) +1)^2+M(\eta)+\gamma M^2(\eta)\rp\\
        \Ra &\;\lambda_{k+1}(L)\leq b_{\min}n_{\min}-8\sqrt{kn_{\max}\tilde{b}_{\max}\log (24n/\delta)}\\
         (\text{Lemma \ref{lem_spec_L_blk}})&\;\\
        \Ra &\; \lambda_{k+1}(L)\leq \lambda_{k+1}(L_{\text{blk}})-8\sqrt{kn_{\max}\tilde{b}_{\max}\log (24n/\delta)}\\
        \Ra &\;8\sqrt{kn_{\max}\tilde{b}_{\max}\log (24n/\delta)}\leq \lambda_{k+1}(L_{\text{blk}})-\lambda_{k+1}(L)\\
        &\;(\text{Weyl's inequality~\citep{Horn:2012:MA:2422911}})\\
        \Ra &\; 8\sqrt{kn_{\max}\tilde{b}_{\max}\log (24n/\delta)}\leq\|L-L_{\text{blk}}\|\,.
    \end{align*}
    That is, for a given $\delta\geq 4n^{-c}$, if \eqref{eq_pf_thm_err_cond1}\eqref{eq_pf_thm_err_cond2} hold, then
    \begin{align*}
        &\;\prob\lp \lambda_{k+1}(L)\leq \frac{1}{F_l(\eta)}\lp \frac{2}{\epsilon}(\gamma M(\eta) +1)^2+M(\eta)+\gamma M^2(\eta)\rp\rp\\
        &\;\quad\quad\quad\quad\leq \prob\lp \|L-L_{\text{blk}}\|\geq 8\sqrt{kn_{\max}\tilde{b}_{\max}\log (24n/\delta)}\rp\leq \frac{\delta}{6}\,.
    \end{align*}
    Similarly, when
    \be
        \frac{\epsilon}{8\sqrt{k}(\gamma+\gamma^2M(\eta))}\Delta n_{\min}\geq 8\sqrt{kn_{\max}\tilde{b}_{\max}\log (24n/\delta)}\label{eq_pf_thm_err_cond3}\,,
    \ee
    and the spectral clustering (SC) returns the true $\{\mathcal{I}_i\}_{i=1}^k$, then
    \begin{align*}
        &\;\|V_k-\hat{V}_k\|\geq \frac{\epsilon}{4(\gamma+\gamma^2M(\eta))}\\
        (\text{Lemma \ref{lem_fro_bd_sin}}) &\;\\
        \Ra &\; \|\sin \Theta(V_k,V_k^{\text{blk}})\|_F\geq \frac{\epsilon}{4(\gamma+\gamma^2M(\eta))}\\
        (\text{Davis-Khan~\citep{Yu2014}})&\;\\
        \Ra &\; \frac{2\sqrt{k}\|L-L_{\text{blk}}\|}{\lambda_{k+1}(L_{\text{blk}})-\lambda_k(L_{\text{blk}})}\geq \frac{\epsilon}{4(\gamma+\gamma^2M(\eta))}\\
        (\text{Lemma \ref{lem_spec_L_blk}})&\;\\
        \Ra &\;\|L-L_{\text{blk}}\|\geq \frac{\epsilon}{8\sqrt{k}(\gamma+\gamma^2M(\eta))}\Delta n_{\min}\\
        \Ra &\;\|L-L_{\text{blk}}\|\geq 8\sqrt{kn_{\max}\tilde{b}_{\max}\log (24n/\delta)}\,.
    \end{align*}
    That is, for a given $\delta\geq 4n^{-c}$, if \eqref{eq_pf_thm_err_cond1}\eqref{eq_pf_thm_err_cond3} hold, then
    \begin{align*}
        &\;\prob\lp \|V_k-\hat{V}_k\|\geq \frac{\epsilon}{4(\gamma+\gamma^2M(\eta))}, \text{``SC returns true $\{\mathcal{I}_i\}_{i=1}^k$"}\rp\\
        &\;\quad\quad\quad\quad\leq \prob\lp \|L-L_{\text{blk}}\|\geq 8\sqrt{kn_{\max}\tilde{b}_{\max}\log (24n/\delta)}\rp\leq \frac{\delta}{6}\,.
    \end{align*}
    Now by Lemma \ref{lem_err_prob_bd}, we have
    \begin{align*}
        &\;\prob\lp \sup_{s\in(-j\eta,+j\eta)}\|T_{yu}(s)-\hat{T}_k(s)\|\geq \epsilon\rp\\
        \leq &\;2\prob\lp \lambda_{k+1}(L)\leq \frac{1}{F_l(\eta)}\lp \frac{2}{\epsilon}(\gamma M(\eta) +1)^2+M(\eta)+\gamma M^2(\eta)\rp\rp\\
        &\; \qquad\qquad\qquad\qquad\qquad+\prob\lp \|V_k-\hat{V}_k\|\geq \frac{\epsilon}{4(\gamma+\gamma^2M(\eta))}\rp\\
        \leq &\;2\prob\lp \lambda_{k+1}(L)\leq \frac{1}{F_l(\eta)}\lp \frac{2}{\epsilon}(\gamma M(\eta) +1)^2+M(\eta)+\gamma M^2(\eta)\rp\rp\\
        &\; \qquad\qquad\qquad+\prob\lp \|V_k-\hat{V}_k\|\geq \frac{\epsilon}{4(\gamma+\gamma^2M(\eta))}, \text{``SC returns true $\{\mathcal{I}_i\}_{i=1}^k$"}\rp\\
        &\; \qquad\qquad\qquad+\prob\lp \|V_k-\hat{V}_k\|\geq \frac{\epsilon}{4(\gamma+\gamma^2M(\eta))}, \text{``SC does not return true $\{\mathcal{I}_i\}_{i=1}^k$"}\rp\\
        \leq &\;2\prob\lp \lambda_{k+1}(L)\leq \frac{1}{F_l(\eta)}\lp \frac{2}{\epsilon}(\gamma M(\eta) +1)^2+M(\eta)+\gamma M^2(\eta)\rp\rp\\
        &\; \qquad\qquad\qquad+\prob\lp \|V_k-\hat{V}_k\|\geq \frac{\epsilon}{4(\gamma+\gamma^2M(\eta))}, \text{``SC returns true $\{\mathcal{I}_i\}_{i=1}^k$"}\rp\\
        &\; \qquad\qquad\qquad+\prob\lp \text{``SC does not return true $\{\mathcal{I}_i\}_{i=1}^k$"}\rp\\
        \leq &\; 2\cdot \frac{\delta}{6}+\frac{\delta}{6}+\frac{\delta}{2}=\delta\,,
    \end{align*}
    In the last inequality, we upper bound the first and the second probability by picking a sufficiently large $n$ such that \eqref{eq_pf_thm_err_cond1}\eqref{eq_pf_thm_err_cond2}\eqref{eq_pf_thm_err_cond3} hold, and the last probability is upper bounded by $\frac{\delta}{2}$ if we pick $n\geq \tilde{N}\lp\frac{\delta}{2}\rp$ by our assumption \ref{assump_sc_algo}.
\end{proof}
\subsection{Proofs of Auxiliary Lemmas}\label{app_ssec_pf_ax_lem}

\begin{proof}[Proof of Lemma \ref{lem_Tk_hinf_bd}]
    For each $g_i(s), i=1,\cdots,n$, we have, by the OSP property,
    $$
        Re(g_i(s))\geq \frac{1}{\gamma} |g_i(s)|^2,\forall Re(s)>0\,.
    $$
    we have, for the diagonal transfer matrix $G(s)=\dg\{g_i(s)\}_{i=1}^n$:
    \be
        2Re(G(s))=G^*(s)+G(s)\succeq \frac{2}{\gamma}   G^*(s)G(s)\,, \forall Re(s)>0\,. \label{eq_lem_Tk_hinf_bd_eq1}
    \ee
    Since $g_i(s)$ are all OSP, then all $g_i(s)$ are positive real~\citep{khalil1996nonlinear}. A positive real function that is not a zero function has no zero nor pole on the left half plane. Therefore $g_i(s)$ are invertible for all $Re(s)>0$, which ensures that $G(s)$ is invertible for all $Re(s)>0$. Multiply $(G^{-1})^*$ on the left and $G^{-1}$ on the right of \eqref{eq_lem_Tk_hinf_bd_eq1}, we have
    \be
        G^{-1}(s)+(G^{-1}(s))^*\succeq \frac{2}{\gamma}I, \forall Re(s)>0\,. \label{eq_lem_Tk_hinf_bd_eq2}
    \ee
    Multiply $V^T_k$ on the left and $V_k$ on the right of \eqref{eq_lem_Tk_hinf_bd_eq2}, we have
    $$
        V^T_kG^{-1}(s)V_k+(V_k^TG^{-1}(s)V_k)^*\succeq \frac{2}{\gamma}I, \forall Re(s)>0\,,
    $$
    using the fact that $f(s)\Lambda_k$ is PR, we have 
    \be
        V^T_kG^{-1}(s)V_k+f(s)\Lambda_k+(V_k^TG^{-1}(s)V_k+f(s)\Lambda_k)^*\succeq \frac{2}{\gamma} I, \forall Re(s)>0\,,
    \ee
    Notice that we have defined $H_k(s)=V^T_kG^{-1}(s)V_k+f(s)\Lambda_k$, then we conclude that
    $
        H_k(s)+(H_k(s))^*\succeq\frac{2}{\gamma} I\,,
    $
    or equivalently,
    \be
        \bmt I \\ H_k(s)\emt^* \bmt -\frac{2}{\gamma} I & I\\ I & 0\emt\bmt I \\ H_k(s)\emt\succeq 0, \forall Re(s)>0
    \ee
    Moreover, we have
    $$
        \bmt -\frac{2}{\gamma} I & I\\ I & 0\emt+\bmt \frac{1}{\gamma} I & 0\\ 0 & -\gamma^2 \frac{\epsilon}{2} I\emt=\bmt -\frac{1}{\gamma} I & I\\ I & -\gamma I\emt\preceq 0\,,
    $$
    since its Schur complement is a zero matrix.
    
    Therefore,
    $$
        \bmt I \\ H_k(s)\emt^* \bmt -\frac{1}{\gamma} I & I\\ I & -\gamma I\emt\bmt I \\ H_k(s)\emt
        \succeq\bmt I \\ H_k(s)\emt^* \bmt -\frac{2}{\gamma} I & I\\ I & 0\emt\bmt I \\ H_k(s)\emt\succeq 0\,, \forall Re(s)>0\,,
    $$
    which is exactly,
    $$\gamma(H_k(s))^*(H_k(s))\succeq \frac{1}{\gamma}I\,, \forall Re(s)>0\,.$$
    This shows that
    $$
        \sigma_{\min}^2(H_k(s))\geq \frac{1}{\gamma^2},\forall Re(s)>0\,,
    $$
    which leads to
    $$
        \|T_k(s)\|=\|V_kH_k^{-1}(s)V_k^T\|=\|H_k^{-1}(s)\|\leq \gamma\,, \forall Re(s)>0\,.
    $$
    This is exactly $\|T_k(s)\|_{\mathcal{H}_\infty}\leq \gamma$.
\end{proof}

\begin{proof}[Proof of Lemma \ref{lem_T_k_approx_err}]
    Denote $H_k(s):=V_k^T\dg\{g_i^{-1}(s)\}V_k+f(s)\Lambda_k$ and $\hat{H}_k(s):=\hat{V}_k^T\dg\{g_i^{-1}(s)\}\hat{V}_k+f(s)\Lambda_k$, then
    \begin{align*}
        &\;\|T_k(s)-\hat{T}_k(s)\|\\
        =&\;\|V_kH_k(s)V_k^T-\hat{V}_k\hat{H}_k(s)\hat{V}_k^T\|\\
        =&\;\|\hat{V}_kH_k^{-1}(s)(V_k^T-\hat{V}_k^T)+(V_k-\hat{V}_k)H_k^{-1}(s)V_k^T+\hat{V}_k(H_k^{-1}(s)-\hat{H}_k^{-1}(s))\hat{V}_k^T\|\\
        \leq &\; \|\hat{V}_kH_k^{-1}(s)(V_k^T-\hat{V}_k^T)\|+\|(V_k-\hat{V}_k)H_k^{-1}(s)V_k^T\|+\|\hat{V}_k(H_k^{-1}(s)-\hat{H}_k^{-1}(s))\hat{V}_k^T\|
    \end{align*}
    Notice that for any $s\in (-j\eta,+j\eta)$,
    \be\label{eq_lem_T_k_approx_err_eq1}
        \|\hat{V}_kH_k^{-1}(s)(V_k^T-\hat{V}_k^T)\|\leq \|H_k^{-1}(s)\|\|V_k-\hat{V}_k\|\leq \|H_k^{-1}(s)\|_{\mathcal{H}_\infty}\|V_k-\hat{V}_k\|\leq \gamma\|V_k-\hat{V}_k\|\,,
    \ee
    where the last inequality uses the intermediate result $\|H_k^{-1}(s)\|$ in the proof for Lemma \ref{lem_Tk_hinf_bd}. Similarly,
    \be\label{eq_lem_T_k_approx_err_eq2}
        \|(V_k-\hat{V}_k)H_k^{-1}(s)V_k^T\|\leq \gamma \|V_k-\hat{V}_k\|\,.
    \ee
    For the last term, we have for any $s\in (-j\eta,+j\eta)$,
    \begin{align}
        &\;\|\hat{V}_k(H_k^{-1}(s)-\hat{H}_k^{-1}(s))\hat{V}_k^T\|\nonumber\\
        \leq &\; \|H_k^{-1}(s)-\hat{H}_k^{-1}(s)\|\nonumber\\
        =&\; \|\hat{H}_k^{-1}(s)(\hat{H}_k(s)-H_k(s))H_k^{-1}(s)\|\nonumber\\
        \leq &\; \|\hat{H}_k^{-1}(s)\|\|H_k^{-1}(s)\|\|\hat{H}_k(s)-H_k(s)\|\nonumber\\
        \leq &\; \gamma^2 \|V_k^T\dg\{g_i^{-1}(s)\}V_k-\hat{V}_k^T\dg\{g_i^{-1}(s)\}\hat{V}_k\|\nonumber\\
        \leq &\; \gamma^2\|(V_k^T-\hat{V}_k^T)\dg\{g_i^{-1}(s)\}V_k+\hat{V}_k^T\dg\{g_i^{-1}(s)\}(V_k-\hat{V}_k)\|\nonumber\\
        \leq &\; 2\gamma^2\|\dg\{g_i^{-1}(s)\}\|\|V_k-\hat{V}_k\|\leq 2\gamma^2M(\eta)\|V_k-\hat{V}_k\|\,.\label{eq_lem_T_k_approx_err_eq3} 
    \end{align}
    Using the bounds in \eqref{eq_lem_T_k_approx_err_eq1}\eqref{eq_lem_T_k_approx_err_eq2}\eqref{eq_lem_T_k_approx_err_eq3}, we finally have
    \begin{align*}
        &\;\sup_{s\in(-j\eta,+j\eta)}\|T_k(s)-\hat{T}_k(s)\|\\
        \leq &\; \sup_{s\in(-j\eta,+j\eta)}\lp \|\hat{V}_kH_k^{-1}(s)(V_k^T-\hat{V}_k^T)\|+\|(V_k-\hat{V}_k)H_k^{-1}(s)V_k^T\|+\|\hat{V}_k(H_k^{-1}(s)-\hat{H}_k^{-1}(s))\hat{V}_k^T\|\rp\\
        \leq & \; 2(\gamma+\gamma^2M(\eta))\|V_k-\hat{V}_k\|\leq 2(\gamma+\gamma^2M(\eta))\|V_k-\hat{V}_k\|_F\,.
    \end{align*}
\end{proof}
\begin{proof}[Proof of Lemma \ref{lem_err_prob_bd}]
    We have defined $T_k(s)=V_k(V_k^T\dg\{g_i^{-1}(s)\}V_k+f(s)\Lambda_k)^{-1}V_k^T$, then
    \begin{align}
        &\;\prob\lp \sup_{s\in(-j\eta,+j\eta)}\|T_{yu}(s)-\hat{T}_k(s)\|\geq \epsilon\rp\nonumber\\
        \leq &\;\prob\lp \sup_{s\in(-j\eta,+j\eta)}\|T_{yu}(s)-T_k(s)\|+ \sup_{s\in(-j\eta,+j\eta)}\|T_k(s)-\hat{T}_k(s)\|\geq \epsilon\rp\nonumber\\
        \leq &\; \prob\lp \sup_{s\in(-j\eta,+j\eta)}\|T_{yu}(s)-T_k(s)\|\geq \frac{\epsilon}{2}\rp+\prob\lp \sup_{s\in(-j\eta,+j\eta)}\|T_k(s)-\hat{T}_k(s)\|\geq \frac{\epsilon}{2}\rp\,.\label{eq_lem_err_prob_bd_eq1}
    \end{align}
    For the first term, we have
    \begin{align}
        &\;\prob\lp \sup_{s\in(-j\eta,+j\eta)}\|T_{yu}(s)-T_k(s)\|\geq \frac{\epsilon}{2}\rp\nonumber\\
        =&\;\prob\lp \sup_{s\in(-j\eta,+j\eta)}\|T_{yu}(s)-T_k(s)\|\geq \frac{\epsilon}{2},\ \lambda_{k+1}(L)\leq \frac{M(\eta)+\gamma M^2(\eta)}{F_l(\eta)}\rp\nonumber\\
        &\; \qquad + \prob\lp \sup_{s\in(-j\eta,+j\eta)}\|T_{yu}(s)-T_k(s)\|\geq \frac{\epsilon}{2},\ \lambda_{k+1}(L)> \frac{M(\eta)+\gamma M^2(\eta)}{F_l(\eta)}\rp\nonumber\\
        \leq &\; \prob\lp  \lambda_{k+1}(L)\leq \frac{M(\eta)+\gamma M^2(\eta)}{F_l(\eta)}\rp\nonumber\\
        &\;\qquad+\prob\lp \sup_{s\in(-j\eta,+j\eta)}\|T_{yu}(s)-T_k(s)\|\geq \frac{\epsilon}{2},\ \lambda_{k+1}(L)> \frac{M(\eta)+\gamma M^2(\eta)}{F_l(\eta)}\rp\nonumber\\
        \overset{(a)}{\leq} &\; \prob\lp  \lambda_{k+1}(L)\leq \frac{M(\eta)+\gamma M^2(\eta)}{F_l(\eta)}\rp+\prob\lp \frac{(\gamma M(\eta)+1)^2}{F_l(\eta) \lambda_{k+1}(L)-M(\eta)-\gamma M(\eta)}\geq \frac{\epsilon}{2}\rp\nonumber\\
        \overset{(b)}{\leq} &\;2\prob\lp \lambda_{k+1}(L)\leq \frac{1}{F_l(\eta)}\lp \frac{2}{\epsilon}(\gamma M(\eta) +1)^2+M(\eta)+\gamma M^2(\eta)\rp\rp\,,\label{eq_lem_err_prob_bd_eq2}
    \end{align}
    where (a) is from the fact that when $\lambda_{k+1}(L)> \frac{M(\eta)+\gamma M^2(\eta)}{F_l(\eta)}$, we can apply Theorem \ref{thm_Tk} for any $s_0\in (-j\eta,+j\eta)$, with a uniform bound
    $$
        \|T_{yu}(s_0)-T_k(s_0)\|\leq \frac{(\gamma M(\eta)+1)^2}{F_l(\eta) \lambda_{k+1}(L)-M(\eta)-\gamma M(\eta)}\,,
    $$
    then applying supremum gives us $
        \sup_{s\in(-j\eta,+j\eta)}\|T_{yu}(s)-T_k(s)\|\leq \frac{(\gamma M(\eta)+1)^2}{F_l(\eta) \lambda_{k+1}(L)-M(\eta)-\gamma M(\eta)}\,,
    $ which implies the event $\frac{(\gamma M(\eta)+1)^2}{F_l(\eta) \lambda_{k+1}(L)-M(\eta)-\gamma M(\eta)}\geq \frac{\epsilon}{2}$. And the (b) is due to the fact that the second probability is always larger than the first one.
    
    For the second term, we have, by Lemma \ref{lem_T_k_approx_err}
    \begin{align}
        \prob\lp \sup_{s\in(-j\eta,+j\eta)}\|T_k(s)-\hat{T}_k(s)\|\geq \frac{\epsilon}{2}\rp &\leq\; \prob \lp 2(\gamma+\gamma^2M(\eta))\|V_k-\hat{V}_k\|\geq \frac{\epsilon}{2}\rp\nonumber\\
        &=\; \prob\lp \|V_k-\hat{V}_k\|\geq \frac{\epsilon}{4(\gamma+\gamma^2M(\eta))}\rp\,.\label{eq_lem_err_prob_bd_eq3}
    \end{align}
    Apply \eqref{eq_lem_err_prob_bd_eq2}\eqref{eq_lem_err_prob_bd_eq3} to \eqref{eq_lem_err_prob_bd_eq1} gives the desired bound.
\end{proof}
\begin{proof}[Proof of Lemma \ref{lem_fro_bd_sin}]
    Consider $V_k=\bmt \frac{\one}{\sqrt{n}}& v_2(L) &\cdots & v_k(L)\emt$ from the random Laplacian matrix, and $V_k^\text{blk}=\bmt \frac{\one}{\sqrt{n}}& v_2(L_\text{blk}) &\cdots & v_k(L_\text{blk})\emt$ from $L_\text{blk}$. The singular values of $V_k^TV_k^{\text{blk}}$ are exactly the cosine of the principal angles between the two subspace spanned by $V_k$ and $V_k^{\text{blk}}$. Notice that 
    \be
        V_k^TV_k^{\text{blk}}=\bmt 1 & 0\\ 0 & \tilde{V}_k^T\tilde{V}_k^{\text{blk}}\emt,\text{where } \tilde{V}_k\!=\!\bmt v_2(L) &\cdots & v_k(L)\emt, \tilde{V}_k^{\text{blk}}\!=\!\bmt  v_2(L_\text{blk}) &\cdots & v_k(L_\text{blk})\emt\,,
    \ee
    then there exists $O_1,O_2\in\mathcal{O}^{(k-1)\times(k-1)}$ such that
    \be
        \bmt 1&0\\ 0& O_1^T\emt\bmt 1 & 0\\ 0 & \tilde{V}_k^T\tilde{V}_k^{\text{blk}}\emt\bmt 1&0\\ 0& O_2\emt=\bmt 1 & & & \\ & \cos\theta_2 & & \\ & & \ddots & \\ & & & \cos\theta_k \emt\,,
    \ee
    where the first diagonal term corresponds to the cosine of the first principal angle $\theta_1=0$. Now consider the orthogonal matrix $O=\bmt 1&0\\ 0& O_2\emt\bmt 1&0\\ 0& O_1^T\emt\in\mathbb{R}^{k\times k}$, we have
    \begin{align*}
        \|V_k^{\text{blk}}O-V_k\|_F^2&=\;\tr\lp (V_k^{\text{blk}}O-V_k)^T(V_k^{\text{blk}}O-V_k)\rp\\
        &=\;2d-2\tr(V_k^TV_k^{\text{blk}}O)\\
        &=\;2d-2\tr\lp \bmt 1&0\\ 0& O_1^T\emt V_k^TV_k^{\text{blk}}\bmt 1&0\\ 0& O_2\emt\rp\\
        &=\; 2d-\sum_{i=1}^k\cos\theta_i\\
        &\leq\; 2d-\sum_{i=1}^k\cos^2\theta_i =\|\sin\Theta(V_k,V_k^{\text{blk}})\|^2_F\,.
    \end{align*}
    Now consider the $\hat{V}_k=P_{\{\mathcal{I}_i\}_{i=1}^k}S^*$, where $\{\mathcal{I}_i\}_{i=1}^k$ is the clustering result from the spectral clustering algorithm, and $S^*$ from solving the optimization problem in \eqref{eq_V_fit_opt_fix_part}. Notice that if the clustering result $\{\mathcal{I}_i\}_{i=1}^k$ is correct, i.e., $L_\text{blk}$ is indeed $k$-block-ideal w.r.t. $P_{\{\mathcal{I}_i\}_{i=1}^k}$, then there exists some invertible matrix $\tilde{S}\in\mathbb{R}^{k\times k}$ such that \be
        V_k^{\text{blk}}=P_{\{\mathcal{I}_i\}_{i=1}^k}\tilde{S}\,,
    \ee
    which implicitly requires $\tilde{S}e_1=\one_k$ and $\tilde{S}^T\dg\{n_i\}_{i=1}^k\tilde{S}=I_k$. It is easy to show that
    $S=\tilde{S}O$ is a feasible solution to \eqref{eq_V_fit_opt_fix_part}:
    \be
        Se_1=\tilde{S}Oe_1=\tilde{S}e_1=\one_k,\ S^T\dg\{n_i\}_{i=1}^kS=O^T\tilde{S}^T\dg\{n_i\}_{i=1}^k\tilde{S}O=I_k\,.
    \ee
    Therefore
    \be
        \|\hat{V}_k-V_k\|_F\leq \|P_{\{\mathcal{I}_i\}_{i=1}^k}S-V_k\|_F=\|V_k^{\text{blk}}O-V_k\|_F\leq \|\sin \Theta(V_k,V_k^{\text{blk}})\|_F\,.
    \ee
\end{proof}
\newpage
\section{Eigenvalues and eigenvectors of $L_\text{blk}$}\label{app_spec_L_blk}
Assume the network has the following adjacency matrix:
    \be
        A_\text{blk}=\underbrace{\bmt \one_{n_1} & 0 & \cdots & 0\\
        0 & \one_{n_2} & \cdots & 0\\
        \vdots & & \ddots & \vdots \\
        0 & 0 & \cdots & \one_{n_k}\emt}_{:=P} B_k P^T\,,
    \ee
    where $A_k\in\mathbb{R}^{k\times k}$ and
    \be
        [B_k]_{ij}=\begin{cases}
            \alpha_{ii}, & i=j\\
            \beta_{ij}, & i\leq j\\
            \beta_{ij}, & i>j
        \end{cases}\,.
    \ee
    The Laplacian
    \be
        L_\text{blk}=\dg\{A_\text{blk}\one_n\}-A_\text{blk}\,.
    \ee
    
\begin{proposition}\label{prop_eigenspace_L_blk}
        Let $n_{\min}=\min\{n_i:\ i=1,\cdots,k\}$, and $n_{\max}=\max\{n_i:\ i=1,\cdots,k\}$ Suppose
        \be
            \min_i\{\alpha_{ii}\}-\frac{2n_{\max}}{n_{\min}}\max_ i\sum_{j\neq i}\beta_{ij}:=\Delta >0\,.
        \ee
        Define
        \be
            \tilde{L}_k=\dg\{\Tilde{B}_{k}\one_k\}-\Tilde{A}_{k},\quad\Tilde{B}_{k}=B_k\cdot\dg\{n_i\}_{i=1}^k\,,
        \ee
        and let $v_i(\tilde{L}_k)$ be the right eigenvector of $\tilde{L}_k$ associated with $\lambda_i(\tilde{L}_k)$. Then for $i=1,\cdots,k$, we have
        \begin{enumerate}
            \item ($L_\text{blk}$ is k-block-ideal)
            \be
                \lambda_i(L_{\text{blk}})=\lambda_i(\tilde{L}_k), v_i(L_{\text{blk}})= \bmt \one_{n_1} & 0 & \cdots & 0\\
                0 & \one_{n_2} & \cdots & 0\\
                \vdots & & \ddots & \vdots \\
                0 & 0 & \cdots & \one_{n_k}\emt v_i(\tilde{L}_k)\,.
            \ee
        \end{enumerate}
        Moreover, let  $b_{\min}:=\min\{[B_k\one_k]_i:\ i=1,\cdots,k\}$ be the minimum row sum of $B_k$, then we have
        \begin{enumerate}[resume]
            \item ($\lambda_{k+1}(L_{\text{blk}})$ is large)
            \be
            \lambda_{k+1}(L_{\text{blk}})\geq b_{\min}n_{\min}\,,
        \ee
        \item (there is a sufficient spectral gap $\lambda_{k+1}(L_{\text{blk}})-\lambda_{k}(L_{\text{blk}})$)
        \be
            \lambda_{k+1}(L_{\text{blk}})-\lambda_{k}(L_{\text{blk}})\geq \Delta n_{\min}\,.
        \ee
        \end{enumerate}
        
    \end{proposition}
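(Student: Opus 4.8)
The plan is to exploit the exchangeability of nodes within each block, which forces $L_{\text{blk}}$ to respect the orthogonal splitting $\mathbb{R}^n=\mathcal{R}\oplus\mathcal{R}^\perp$, where $\mathcal{R}:=\mathrm{range}(P)=\mathrm{span}\{\one_{\mathcal{I}_1},\dots,\one_{\mathcal{I}_k}\}$ and $\mathcal{R}^\perp$ consists of vectors summing to zero on each block. First I would show that both subspaces are $L_{\text{blk}}$-invariant. Since $A_{\text{blk}}=PB_kP^T$ and $P^Tx=0$ for every $x\in\mathcal{R}^\perp$, the adjacency part annihilates $\mathcal{R}^\perp$; and because each node's degree $[A_{\text{blk}}\one_n]_i=[\tilde B_k\one_k]_{(i)}$ depends only on its block membership, the diagonal matrix $D_{A_{\text{blk}}}$ acts as the scalar $d_r$ on block $r$ and therefore also preserves $\mathcal{R}^\perp$. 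Hence $L_{\text{blk}}$ maps $\mathcal{R}^\perp$ into itself, and by symmetry it preserves $\mathcal{R}$ as well.

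Next I would read off the spectrum on each invariant piece. On $\mathcal{R}^\perp$ we have $L_{\text{blk}}x=D_{A_{\text{blk}}}x$, so the restriction is diagonal with eigenvalues $d_r=[\tilde B_k\one_k]_r$, each of multiplicity $n_r-1$; the one-line estimate $d_r=\sum_l[B_k]_{rl}n_l\ge n_{\min}[B_k\one_k]_r\ge b_{\min}n_{\min}$ lower-bounds all of them. On $\mathcal{R}$ I would use the intertwining identity $L_{\text{blk}}P=P\tilde L_k$, which follows from $A_{\text{blk}}P=PB_k(P^TP)=P\tilde B_k$ together with $P^TP=\dg\{n_i\}$ and $D_{A_{\text{blk}}}P=P\dg\{d_r\}$. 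Thus $z\mapsto Pz$ intertwines $\tilde L_k$ with $L_{\text{blk}}|_{\mathcal{R}}$, so the $k$ eigenvalues of $L_{\text{blk}}$ on $\mathcal{R}$ are exactly those of $\tilde L_k$ with eigenvectors $Pv_i(\tilde L_k)$, which gives the eigenvector formula in the first claim. Because $\tilde L_k$ is not symmetric, I would certify reality of its spectrum by the diagonal similarity $\dg\{\sqrt{n_i}\}\tilde L_k\dg\{1/\sqrt{n_i}\}=\hat L$, where $\hat L=\dg\{d_i\}-\dg\{\sqrt{n_i}\}B_k\dg\{\sqrt{n_i}\}$ is symmetric.

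With the two spectra in hand, the remaining work is an upper bound on $\lambda_{\max}(\tilde L_k)=\lambda_{\max}(\hat L)$. Applying Gershgorin to the symmetric $\hat L$, whose $i$-th row contributes $\sum_{j\ne i}\beta_{ij}n_j+\sum_{j\ne i}\sqrt{n_in_j}\,\beta_{ij}\le 2n_{\max}\max_i\sum_{j\ne i}\beta_{ij}$, yields $\lambda_{\max}(\tilde L_k)\le 2n_{\max}\max_i\sum_{j\ne i}\beta_{ij}$. Since $\Delta>0$ gives $n_{\min}\min_r\alpha_{rr}>2n_{\max}\max_i\sum_{j\ne i}\beta_{ij}$, I conclude $\lambda_{\max}(\tilde L_k)<\min_r d_r$. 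This single inequality delivers all three conclusions at once: it orders the full spectrum so that the bottom $k$ eigenvalues of $L_{\text{blk}}$ are precisely those of $\tilde L_k$ (first claim), it forces $\lambda_{k+1}(L_{\text{blk}})\ge\min_r d_r\ge b_{\min}n_{\min}$ (second claim), and it bounds the gap by $\lambda_{k+1}(L_{\text{blk}})-\lambda_k(L_{\text{blk}})\ge\min_r d_r-\lambda_{\max}(\tilde L_k)\ge n_{\min}(\min_r\alpha_{rr}-\tfrac{2n_{\max}}{n_{\min}}\max_i\sum_{j\ne i}\beta_{ij})=n_{\min}\Delta$ (third claim).

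The main obstacle I anticipate is not any single estimate but the interdependence of the claims: the index identification $\lambda_i(L_{\text{blk}})=\lambda_i(\tilde L_k)$ in the first claim is meaningful only once one knows the $\tilde L_k$-eigenvalues sit below the $\mathcal{R}^\perp$-eigenvalues $\{d_r\}$, which is exactly the gap inequality proved last. I would therefore organize the argument so that the invariant-subspace decomposition and the two raw spectra are derived first \emph{without} any ordering claim, and the ordering is asserted only after the Gershgorin bound. A secondary point demanding care is the non-symmetry of $\tilde L_k$: without the symmetrizing diagonal similarity one cannot speak of $\lambda_{\max}$ or invoke Gershgorin, and it is also worth stating explicitly that the constancy of degrees within blocks—which hinges on $A_{\text{blk}}$ being exactly block-constant—is precisely what makes $\mathcal{R}^\perp$ invariant in the first place.
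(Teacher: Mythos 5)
Your proof is correct and follows essentially the same route as the paper's: the intertwining identity $L_{\text{blk}}P=P\tilde{L}_k$ handles the range of $P$, the block-constant degrees $d_r=[\tilde{B}_k\one_k]_r$ with multiplicity $n_r-1$ handle its orthogonal complement, and a Gershgorin estimate placing the spectrum of $\tilde{L}_k$ at least $n_{\min}\Delta$ below $\min_r d_r$ delivers the ordering, the eigenvalue bound, and the spectral gap, exactly as in the paper. The only genuine difference is that you symmetrize $\tilde{L}_k$ via the diagonal similarity $\dg\{\sqrt{n_i}\}\tilde{L}_k\dg\{1/\sqrt{n_i}\}$ before applying Gershgorin --- a nice touch that makes realness and diagonalizability of its spectrum explicit --- whereas the paper applies Gershgorin directly to the columns of the nonsymmetric $\tilde{L}_k$; note, though, that your remark that one ``cannot speak of $\lambda_{\max}$ or invoke Gershgorin'' without symmetrizing is overstated, since Gershgorin requires no symmetry and realness already follows from the intertwining with the symmetric $L_{\text{blk}}$.
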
    
    \begin{proof}
        The proof takes few steps: First we show that $\{(\lambda_i(\tilde{L}_k),v_i(\tilde{L}_k))\}_{i=1}^k$ are eigenpairs of $L_\text{blk}$, then we show that $\{\lambda_i(\tilde{L}_k)\}_{i=1}^k$ are indeed the first $k$ smallest eigenvalues of $L_\text{blk}$. Lastly we provide the lower bound on both  $\lambda_{k+1}(L_{\text{blk}})$ and $\lambda_{k+1}(L_{\text{blk}})-\lambda_{k}(L_{\text{blk}})$\,.
        \paragraph{Show eigenpairs$\{(\lambda_i(\tilde{L}_k),v_i(\tilde{L}_k))\}_{i=1}^k$}: Notice that
        \begin{align}
            L_{\text{blk}}P&=\;(\dg\{A_\text{blk}\one_n\}-A_\text{blk})P\nonumber\\
            &=\; (\dg\{PB_kP^T\one_n\}-PB_kP^T)P\nonumber\\
            &=\; \dg\{PB_k\dg\{n_i\}_{i=1}^k\one_k\}-PB_k\dg\{n_i\}_{i=1}^k\nonumber\\
            &=\;P\dg\{B_k\dg\{n_i\}_{i=1}^k\one_k\}-PB_k\dg\{n_i\}_{i=1}^k=P\tilde{L}_\text{blk}\,,\label{eq_prop_eigenspace_L_blk_eq1}
        \end{align}
        where we used an equality $P\dg\{x\}=\dg\{Px\}$ for any $x\in\mathbb{R}^{k}$ due to the special structure of $P$. We can obtain $k$ eigenpairs through \eqref{eq_prop_eigenspace_L_blk_eq1}: Given any eigenpair $(\lambda_i(\tilde{L}_\text{blk}),v_i(\tilde{L}_k))$, we have
        \be
            L_\text{blk}Pv_i(\tilde{L}_k)=P\tilde{L}_kv_i(\tilde{L}_k)=\lambda_i(\tilde{L}_\text{blk})Pv_i(\tilde{L}_k)\,,
        \ee
        which suggests $(\lambda_i(\tilde{L}_k),Pv_i(\tilde{L}_k))$ is an eigenpair of $L_\text{blk}$. This holds for every $i=1,\cdots,k$.
        \paragraph{Show that  $\{\lambda_i(\tilde{L}_k)\}_{i=1}^k$ are the first $k$ smallest eigenvalues}: The remaining eigenvalues of $L_\text{blk}$ are easy to find:
        \begin{itemize}[leftmargin=4.5mm]
            \item \!\!\! $\lp \lambda=n_1\alpha_{11}+\sum_{j\neq 1}\beta_{1j}n_j, v=\bmt v_1 \\ 0 \\
            \vdots \\0 \emt\rp$ is an eigenpair for any $v_1\in\mathbb{S}^{n_1-1}$ such that $\one_{n_1}^Tv_1\!=\!0$.
            \item \!\!\! $\lp \lambda=n_2\alpha_{22}+\sum_{j\neq 2}\beta_{2j}n_j, v=\bmt 0 \\ v_2 \\
            \vdots \\0 \emt\rp$ is an eigenpair for any $v_2\in\mathbb{S}^{n_2-1}$ such that $\one_{n_2}^Tv_2\!=\!0$.
            \item $\cdots$
            \item \!\!\!\! $\lp \lambda=n_k\alpha_{kk}+\sum_{j\neq k}\beta_{kj}n_j, v=\bmt 0 \\ 0 \\
            \vdots \\ v_k \emt\rp$ is an eigenpair for any $v_k\in\!\mathbb{S}^{n_k-1}$ such that $\one_{n_k}^Tv_k\!=\!0$.
        \end{itemize}
        Any choice of such $(\lambda,v)$ is an eigenpair because, for example, the eigenpair associated with some $v_1$ satisfies
        \begin{align*}
            L_\text{blk}\bmt v_1 \\ 0 \\
            \vdots \\0 \emt\! \!&=\!\!\lp\bmt (n_1\alpha_{11}+\sum_{j\neq 1}\beta_{1j}n_j)I_{n_1} & & \\
            &\ddots & \\
            & &  (n_k\alpha_{kk}+\sum_{j\neq k}\beta_{kj}n_j)I_{n_k}\emt-PB_kP^T\rp\!\bmt v_1 \\ 0 \\
            \vdots \\0 \emt\\
            &= \bmt (n_1\alpha_{11}+\sum_{j\neq 1}\beta_{1j}n_j)I_{n_1} & & \\
            &\ddots & \\
            & &  (n_k\alpha_{kk}+\sum_{j\neq k}\beta_{kj}n_j)I_{n_k}\emt\bmt v_1 \\ 0 \\
            \vdots \\0 \emt\\
            &= \lp n_1\alpha_{11}+\sum_{j\neq 1}\beta_{1j}n_j\rp v_1\,.
        \end{align*}
        Similar argument can be made for other pairs. This gives us all the rest of the eigenvalues: each eigenvalue $n_i\alpha_{ii}+\sum_{j\neq i}\beta_{ij}n_j$ has multiplicity $n_i-1$. Together with the $k$ eigenvalues $\{\lambda_i(\tilde{L}_k)\}_{i=1}^k$ we have already found in previous derivation, we have all the eigenvalues of $L_\text{blk}$.
        
        The claim that $\{\lambda_i(\tilde{L}_k)\}_{i=1}^k$ are the first $k$ smallest eigenvalues is shown by our assumption:
        \begin{align}
            \min_{i}\lp n_i\alpha_{ii}+\sum_{j\neq i}\beta_{ij}n_j\rp&\geq\; \min_i n_i\alpha_{ii}\nonumber\\
            &\geq\; n_{\min} \min_i\alpha_{ii}\nonumber\\ &\geq\;2n_{\max}\max_ i\sum_{j\neq i}\beta_{ij}+n_{\min}\Delta\nonumber\\
            &\geq \; \max_i (n_i+n_{\max})\sum_{j\neq i}\beta_{ij}+n_{\min}\Delta\nonumber\\
            &\geq \;\max_i \lp n_i\sum_{j\neq i}\beta_{ij} + \sum_{j\neq i}\beta_{ij}n_j\rp+n_{\min}\Delta\nonumber\\
            &\geq\; \max_i\lambda_i(\tilde{L}_k)+n_{\min}\Delta\,,\label{eq_prop_eigenspace_L_blk_eq2}
        \end{align}
        where the last inequality is from the Gershgorin disk theorem~\citep{Horn:2012:MA:2422911} by noticing that for $i$-th column of $\tilde{L}_k$, the diagonal term is $\sum_{j\neq i}\beta_{ij}n_j$ and the sum of the absolute value of the off-diagonal terms is $n_i\sum_{j\neq i}\beta_{ij}$. \eqref{eq_prop_eigenspace_L_blk_eq2} is more than enough to show $\{\lambda_i(\tilde{L}_k)\}_{i=1}^k$ are the first $k$ smallest eigenvalues of $L_{\text{blk}}$.
        
        \paragraph{Bound on the eigenvalue and spectral gap}
        
        Knowing $\{\lambda_i(\tilde{L}_k)\}_{i=1}^k$ are the first $k$ smallest eigenvalues, we have
        \begin{align*}
            \lambda_{k+1}(L_\text{blk})=\min_{i}\lp n_i\alpha_{ii}+\sum_{j\neq i}\beta_{ij}n_j\rp\geq \min_i\lp \alpha_{ii}+\sum_{j=i}\beta_{ij}\rp n_{\min} = b_{\min}n_{\min}\,,
        \end{align*}
        and \eqref{eq_prop_eigenspace_L_blk_eq2} already shows
        \begin{align*}
            \lambda_{k+1}(L_\text{blk})=\min_{i}\lp n_i\alpha_{ii}+\sum_{j\neq i}\beta_{ij}n_j\rp\geq \max_i\lambda_i(\tilde{L}_k)+n_{\min}\Delta=\lambda_{k}(L_{\text{blk}})+n_{\min}\Delta\,.
        \end{align*}
    \end{proof}
    }{}
\end{document}